\documentclass[12pt,reqno]{amsart}
\usepackage{graphicx,url}
\usepackage{caption}
\usepackage{subcaption}
\usepackage{setspace}
\usepackage{empheq}
\usepackage{cite}
\usepackage{amsmath}
\usepackage{mathtools}
\usepackage{textcomp}
\usepackage{amssymb}
\usepackage{tikz-cd}
\usepackage[normalem]{ulem}
\usepackage{comment}

\usepackage{multirow}
\usepackage{color}
\usepackage{xcolor}
\definecolor{MyLinkColor}{rgb}{0,0,0.4}

\newcommand{\tb}{\textcolor{blue}}
\newcommand{\tcr}{\textcolor{red}}

\newcommand{\vp}{\varphi}
    \DeclareMathOperator{\csch}{csch}

\newtheorem{thm}{Theorem}[section]
\newtheorem{prop}[thm]{Proposition}

\theoremstyle{remark} 
\newtheorem{rem}[thm]{Remark}

\numberwithin{equation}{section}   

\title[Three-layer water flows]{Three-layer water flows: Dirichlet-Neumann operators and approximations}
\author[R. I. Ivanov ]{Rossen Ivanov }
\address{School of Mathematics and Statistics, Technological University Dublin, City Campus, Grangegorman Lower, Dublin D07 ADY7, Ireland.}
\email{rossen.ivanov@tudublin.ie}
\author[C. I. Martin]{Calin-Iulian Martin }
\address{Faculty of Mathematics and Computer Science, Babes-Bolyai University, Mihail Kogalniceanu Str. 1, Cluj-Napoca, Romania.}
\email{calin.martin@ubbcluj.ro}

\subjclass[2010]{35Q31, 35Q35.}
\keywords{Nonlinear water waves, discontinuous density stratification, Hamiltonian formulation, Dirichlet-Neumann operators, root separation}
\begin{document}
 \begin{abstract}
The object of investigation in this paper are the nonlinear equations of motion for two-dimensional inviscid water flows with piecewise constant density stratification in a three-layer fluid with a flat bottom, a free surface and two interfaces. We establish a Hamiltonian formulation for the nonlinear governing equations in this setup.
The Hamiltonian of the system and the equations of motion of the surface and of the interfaces are expressed with the help of the Dirichlet-Neumann (DN) operators, which are introduced for each of the layers. 
Then, the linear equations for small amplitudes of the elevation of the surface and of the interfaces in the leading order are derived from which a bi-cubic equation for the dispersion relation is obtained, whose solutions are analysed. The six real solutions for the possible propagation speeds (three positive, related to right-moving waves and three negative, related to left-moving waves) have magnitudes of different order.
Upper and lower bounds for the previously mentioned roots are also given in terms of the coefficients of the equation. Subsequently, approximate formulae for the propagation speeds are derived. 
The importance of the DN operators is further illustrated in a separate analysis of the three-layer model with flat surface (rigid lid). The full nonlinear evolution equations are expressed again in terms of the DN operators, and the equations in the linear regime and the weakly nonlinear propagation regime (the Boussinesq approximation) are derived by a proper expansion of the DN operators. Limits to the two-layer free surface model are obtained as well. {The obtained results are applicable to internal waves in lakes and in the ocean as well as to laboratory experiments with three superimposed fluid layers. }
\end{abstract}

\maketitle
\section{Introduction}
 {Flow stratification—assumed to be of piecewise constant type in our work—is an intrinsic characteristic of geophysical water flows, particularly relevant to large-scale oceanic movements (cf. \cite{CJPOF19, Gill, MP, Val}). Strongly influenced by variations in temperature and salinity, density fluctuations occur frequently in the ocean and lead to a layered structure: fluid parcels of differing densities arrange themselves such that denser layers lie beneath lighter ones (cf. \cite{BecCushRoi, CJGAFD, Fed, Kes}). This vertical stratification results in the formation of interfaces between layers of distinct densities.}
Such a pronounced density stratification occurs within a region approximately 150 km wide on each side of the Equator, extending longitudinally over about 16,000 km in the Pacific Ocean. {This stratification gives rise to a sharp thermocline (also referred to as a pycnocline), which acts as a waveguide for internal waves}, cf. Fedorov \& Brown \cite{Fed}, and Constantin \& Johnson \cite{CJGAFD}. At times, internal waves might have appreciable amplitudes: field data provides evidence for tide-generated internal waves with amplitudes in excess of 100 m, extending over 100 km, cf. \cite{Sim}. Detailed, exact solutions depicting density stratified geophysical flows exhibiting internal waves were presented by Constantin \cite{ConsJPO13, ConsJPO14}.

From a practical side, the understanding of internal waves dynamics is important for mixing processes in oceans, submarine navigation as well as for the need to quantify induced loads on submerged marine platforms.

{Concerning the vertical stratification structure, it is widely accepted in oceanography (see \cite{Suth}) that the ocean is effectively subdivided into layers according to the strength of its stratification. Accordingly we have:  
 adjacent to the surface lies the \emph{mixed layer}, named for the wind-driven stirring of surface waters; beneath it is the the strongly stratified \emph{thermocline}, and below that, the weakly stratified \emph{deep layer} composed of cold, dense water that constitutes the bulk of the ocean’s mass. Complementing this conceptual model, a series of laboratory experiments using a wind flume \cite{Shi} simulated a three-layer density-stratified water body in which the \emph{middle layer (thermocline)} was intentionally made thick, closely matching the thickness of the upper and lower layers. This deliberate configuration was designed to reflect stratification patterns observed in natural lakes and to isolate the role of the thermocline. The study revealed that the thickness of the middle layer is a critical factor in shaping the system’s response to wind forcing, significantly influencing internal wave dynamics, energy transfer, and vertical mixing across the stratified column.}

Minding the previous aspects concerning the type of layering in physically reasonable ocean flows, we consider in this study three-layer stratified water flows: the fluid domain is bounded above by the free surface, below by the (rigid) bottom boundary and is split in three layers of constant, but different density. This type of stratification significantly complicates the analysis of an already highly intricate nonlinear problem. To alleviate these difficulties we will rewrite the governing equations by means of a Hamiltonian (re)formulation. Such a reformulation highlights structural properties and has the advantage that it also reduces the numbers of variables in a way that results in significant simplifications.
Note that the Hamiltonian formulation of the governing equations for two dimensional  gravity water flows was pioneered in \cite{Z68} for irrotational flows, and was extended to rotational flows with constant vorticity in \cite{CIP, Wah1, Wah2, CompelliIvanov2}, variable bottom \cite{CGNS,CIT,CIMT} etc. The Hamiltonian formulation for two-dimensional irrotational two-layer gravity water flows with a free surface was developed in \cite{CGK}, and the rotational counterpart with constant vorticity and constant density in each layer was obtained in \cite{CIM} for the case of periodic flows. The validity of the Hamiltonian perspective in the presence of Coriolis forces in the equatorial $f$-plane approximation, for zonal flows representing localized perturbations of an underlying pure current background state was established in \cite{CICMP}.
We would like to note that the power and relevance of Hamiltonian methods in the study of water flows is reinforced by the work of Craig, Guyenne \& Sulem \cite{CGS2} where a relevant extension to three-dimensional flows is presented. Additionally, a mix of Hamiltonian methods and techniques related to the Dirichlet-Neumann operators is instrumental in the derivation of coupled nonlinear equations relating the interfaces and the traces of the velocity potentials on them, in various propagation regimes, cf. e.g \cite{CGK, CGS2} and Section \ref{Boussinesq} of this work.

The two- and three- layer dynamics has been studied actively and various approximations have been employed, 
from \cite{Kaku} and \cite{Ovs} to the more recent \cite{Cam1,Cam2,Cam3,CHP,OH,Grim,Mil,Lan08}. 
For recent results concerning exact solutions (in Lagrangian and Eulerian coordinates) describing water flows with stratification of discontinuous type we refer the reader to \cite{ConsJPO13, ConsJPO14, Esc1, MarPoF21, Mat, Mas}.

Our aim is to present a Hamiltonian formulation and the full nonlinear equations of motion of two-dimensional stratified  water flows consisting of three layers, each of which having constant density. 
We adopt a versatile framework for our investigations: the densities in each layer can have any values, with the proviso that the stratification is of stable type. Moreover, the layers can have any thicknesses, without any constraints. 

The paper is organised as follows. After introducing the model setup and the governing equations in Section \ref{model_eq}, we show in Section \ref{reform_DN} that they can be rewritten as an infinite-dimensional Hamiltonian system. 
Furthermore, in Section \ref{reform_DN} we present the resulting full nonlinear equations of motion. They are expressed in terms of the so-called Dirichlet-Neumann Operators, which are introduced for the three-layer setting with a free surface. 
In Section \ref{disp} the dispersion relation for the leading order of the wave amplitudes is derived for arbitrary wavelengths. Section \ref{long} is devoted to the dispersion relation in the long-wave limit which is an intricate bi-cubic equation, whose solutions are all real numbers according to arguments presented in
 \cite{Ben, Bai, Yih}. Exploiting a result by Prodanov \cite{Prod, Prod2}, we  give lower and upper bounds for these solutions. The limit to two-layers is presented in Section \ref{2layers}. Approximate solutions in certain regimes are also given. Section \ref{rigid} is devoted to the three-layer rigid lid case. There again we present the full equations in terms of the Dirichlet-Neumann operators and then by the expansion of the operators we recover the dispersion relation of the system and in addition we obtain the equations of the widely used nonlinear Bousinessq regime. {Although the present study focuses on the three-layer model pertinent to oceanographic dynamics, we emphasize that the analytical framework and methodologies developed herein are broadly applicable to a wide range of scenarios involving interfacial internal waves across stratified fluid layers, extending well beyond the confines of saline ocean environments and also to higher order nonlinear approximations, if the quadratic ones are insufficient for the analysis.}

\section{Model setup and governing equations}\label{model_eq}
We assume that the fluid domain consists of a union of layers $\Omega_i,(i=1,2,3)$ depicted in Figure \ref{fig:system} and defined as follows:
\begin{figure}[h!]
\begin{center}
\fbox{\includegraphics[totalheight=0.35\textheight]{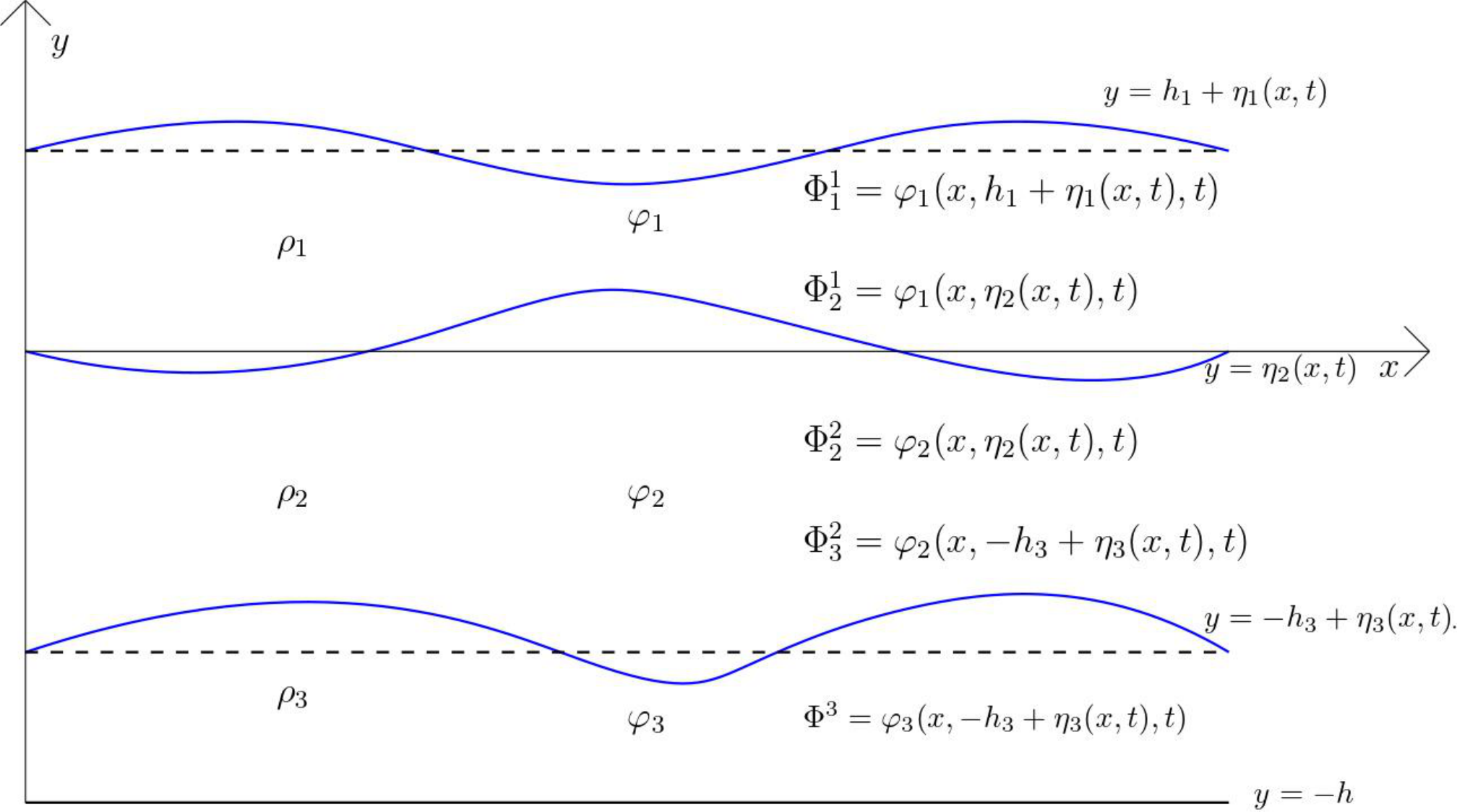}}
\caption{The system under study.}
\label{fig:system}
\end{center}
\end{figure}
\begin{equation}
\setlength{\jot}{10pt}
\begin{split}
&\Omega_1:=\{(x,y,t): x\in\mathbb{R},t\in\mathbb{R},\eta_2(x,t)<y<h_1+\eta_1(x,t)\},\\
&\Omega_2:=\{(x,y,t): x\in\mathbb{R},t\in\mathbb{R},-h_3+\eta_3(x,t)<y<\eta_2(x,t)\},\\
&\Omega_3:=\{(x,y,t): x\in\mathbb{R},t\in\mathbb{R},-h<y<-h_3+\eta_3(x,t)\},
\end{split}
\end{equation}
where $h_1,h_3,h$ are positive constants such that $h>h_3$ and
$\eta_i(x,t)\in\mathcal{S}(\mathbb{R})$, for $i=1,2,3$, are Schwartz functions such that for all $i=1,2,3$ the relations
\begin{equation}\label{zero_av}
    \int_{\mathbb{R}}\eta_i(x,t)\,dx=0\quad{\rm for}\,\,{\rm all}\,\,t,
\end{equation}
hold. Relations \eqref{zero_av} ensure that the average surface level of $\eta_1$ is at $y=h_1$, the average surface level of $\eta_2$ is at $y=0$, and the average surface level of $\eta_3$ is at $y=-h_3$, respectively.

The density of the fluid is assumed to be piecewise constant being equal to $\rho_i (i=1,2,3)$ in the layer $\Omega_i$ for $i=1,2,3$, and such that $\rho_1<\rho_2<\rho_3$ to account for the stable stratification. The same indicial notation will be used to denote the other physical
quantities (e.g. the velocity field, the velocity potentials, the pressure). However, when we refer to the overall physical variable without specification of the layer, we shall use the hat or caret-shaped symbol placed on top of the variable. For example, $(\hat{u},\hat{v})$ will denote the velocity field which takes the value $(u_i,v_i)$ in the layer $\Omega_i$, for $i=1,2,3$. Denoting with $\hat{P}$ the pressure, the equations of motion for an incompressible, inviscid flow are Euler's equations
\begin{equation}\label{Euler}
\begin{split}
    \hat{u}_t+\hat{u}\hat{u}_x+\hat{v}\hat{u}_y &=-\frac{1}{\hat{\rho}}\hat{P}_x\\
    \hat{v}_t+\hat{u}\hat{v}_x+\hat{v}\hat{v}_y &=-\frac{1}{\hat{\rho}}\hat{P}_y-g
    \end{split}\,\,\textrm{in}\,\, \overline{\Omega_1\cup\Omega_2\cup\Omega_3},
\end{equation}
and the equation of mass conservation
\begin{equation}\label{mass_cons}
    \hat{u}_x+\hat{u}_y=0\,\,\textrm{in}\,\, \overline{\Omega_1\cup\Omega_2\cup\Omega_3}.
\end{equation}
Equations \eqref{Euler} and \eqref{mass_cons} are complemented by appropriate boundary conditions. At the free surface, the dynamic boundary condition
\begin{equation}\label{contP_1}
    P_1=P_{atm}\,\,\textrm{on}\,\,y=h_1+\eta_1(x,t),
\end{equation}
decouples the motion of the water from that of the air above, and, in the same time ensures the balance of forces at the surface. Similarly, the balance of forces at the internal boundaries $y=\eta_2(x,t)$ and $y=-h_3+\eta_3(x,t)$, respectively, is expressed by the conditions
\begin{equation}\label{contP_2}
    P_1=P_2\,\,\textrm{on}\,\,y=\eta_2(x,t),
\end{equation}
and 
\begin{equation}\label{contP_3}
    P_2=P_3\,\,\textrm{on}\,\,y=-h_3+\eta_3(x,t).
\end{equation}
The kinematic boundary conditions, which refer to the flat bed $y=-h$, the two interfaces $y=-h_3+\eta_3(x,t)$ and $y=\eta_2(x,t)$, respectively, and to the free surface 
$y=h_1+\eta_1(x,t)$, ensure that these surfaces are impermeable, thus no fluid particle cross them; they are stated as
\begin{equation}\label{kin_cond}
\setlength{\jot}{10pt}
\begin{split}
& v_1=\eta_{1,t}+u_1\eta_{1,x}\,\,\textrm{on}\,\,y=h_1+\eta_1(x,t)\\
  & v_1=\eta_{2,t}+u_1\eta_{2,x}\,\,\textrm{and}\,\,v_2=\eta_{2,t}+u_2\eta_{2,x}   \,\,\textrm{on}\,\,y=\eta_2(x,t)\\
  &v_2=\eta_{3,t}+u_2\eta_{3,x}\,\,\textrm{and}\,\,v_3=\eta_{3,t}+u_3\eta_{3,x}   \,\,\textrm{on}\,\,y={-h_3}+\eta_3(x,t)\\
  &v_3=0\,\,\textrm{on}\,\,y=-h.
\end{split}
\end{equation}
Throughout the paper we will assume that the fluid is rotation free, that is  
\begin{equation}\label{irrot}
    \hat{u}_y(x,y,t)-\hat{v}_x(x,y,t)=0\,\,{\rm for}\,\,{\rm all}\,\,(x,y,t)\in\overline{\Omega_1\cup\Omega_2\cup\Omega_3}.
\end{equation}
The irrotationality assumption \eqref{irrot} allows us to introduce now in each layer $\Omega_i$  a velocity potential $\varphi_i\in C^1(\Omega_i)$ which satisfies 
\begin{equation}\label{vel_pot}
    u_i=\varphi_{i,x}\quad \textrm{and}\quad v_i=\varphi_{i,y}\quad\textrm{for}\,\,i=1,2,3.
\end{equation}
\begin{rem}\label{harm_vel}
Note that, due to \eqref{mass_cons}, the velocity potentials $\varphi_i,\,(i=1,2,3)$ are harmonic functions in $\Omega_i,\,i=1,2,3.$
\end{rem}
We will use, in due course of the paper, the subscripts $_{s_1},_{s_2}$ and $ _{s_3}$, respectively, in order to denote evaluations of various quantities on the surfaces
$y=h_1+\eta_1(x,t)$, $y=\eta_2(x,t)$ and $y=-h_3+\eta_3(x,t)$, respectively.
The kinematic boundary conditions can now be written as 
\begin{align}
    &\eta_{1,t}=(\varphi_{1,y})_{s_1}-\eta_{1,x}(\varphi_{1,x})_{s_1}\\
    &\eta_{2,t}=(\varphi_{1,y})_{s_2}-\eta_{2,x}(\varphi_{1,x})_{s_2}
    =(\varphi_{2,y})_{s_2}-\eta_{2,x}(\varphi_{2,x})_{s_2} \label{eta2t}\\
    &\eta_{3,t}=(\varphi_{2,y})_{s_3}-\eta_{3,x}(\varphi_{2,x})_{s_3}=
    (\varphi_{3,y})_{s_3}-\eta_{3,x}(\varphi_{3,x})_{s_3}.
\end{align}

Moreover, the Euler equations can be recast as
\begin{equation}
\setlength{\jot}{10pt}
\left\{  \begin{split}
    &\nabla\left(\varphi_{1,t}+\frac{|\nabla\varphi_1|^2}{2}+\frac{P_1}{\rho_1}+gy\right)=0\,\,
    \textrm{in}\,\,\Omega_1\\
     &\nabla\left(\varphi_{2,t}+\frac{|\nabla\varphi_2|^2}{2}+\frac{P_2}{\rho_2}+gy\right)=0\,\,
    \textrm{in}\,\,\Omega_2\\
     &\nabla\left(\varphi_{3,t}+\frac{|\nabla\varphi_3|^2}{2}+\frac{P_3}{\rho_3}+gy\right)=0\,\,
    \textrm{in}\,\,\Omega_3\\
  \end{split}  \right.
\end{equation}
where $\nabla$ denotes the spatial gradient. Thus, there are functions $t\mapsto f_i(t), i=1,2,3$ such that 

\begin{equation}\label{Bernoulli}
\setlength{\jot}{10pt}
\left\{ \begin{split}
    &\varphi_{1,t}+\frac{|\nabla\varphi_1|^2}{2}+\frac{P_1}{\rho_1}+gy=f_1(t)\,\,
    \textrm{in}\,\,\Omega_1,\\
     &\varphi_{2,t}+\frac{|\nabla\varphi_2|^2}{2}+\frac{P_2}{\rho_2}+gy=f_2(t)\,\,
    \textrm{in}\,\,\Omega_2,\\
     &\varphi_{3,t}+\frac{|\nabla\varphi_3|^2}{2}+\frac{P_3}{\rho_3}+gy=f_3(t)\,\,
    \textrm{in}\,\,\Omega_3.\\
  \end{split}  \right.
\end{equation}
\\
Utilizing the interface conditions \eqref{contP_1}, \eqref{contP_2} and \eqref{contP_3} and choosing convenient time dependent functions $f_1,\,f_2$ and $f_3$ in \eqref{Bernoulli} we obtain the relations
\\
\begin{equation}\label{Ber}
\setlength{\jot}{10pt}
\left\{ \begin{split}
&\rho_1(\varphi_{1,t})_{s_1}+\rho_1 \frac{|\nabla\varphi_1|^2_{s_1}}{2}+g\rho_1\eta_1=0,\\
&\rho_1\left((\varphi_{1,t})_{s_2}+\frac{|\nabla\varphi_1|^2_{s_2}}{2}+g\eta_2 \right)=
\rho_2\left((\varphi_{2,t})_{s_2}+\frac{|\nabla\varphi_2|^2_{s_2}}{2}+g\eta_2 \right),\\
&\rho_2\left((\varphi_{2,t})_{s_3}+\frac{|\nabla\varphi_2|^2_{s_3}}{2}+g\eta_3 \right)=
\rho_3\left((\varphi_{3,t})_{s_3}+\frac{|\nabla\varphi_3|^2_{s_3}}{2}+g\eta_3 \right).
  \end{split}  \right.
\end{equation}
One of the main tenets of the Hamiltonian formulation refers to the possibility of reducing the number of the dynamical variables. This feature emerges, using the Ansatz by Benjamin \& Bridges \cite{BenBri, BenBri2} (utilized also by Craig et al. \cite{CGS}). Let us define
\begin{equation}\label{traces}
\setlength{\jot}{10pt}
\begin{split}
&\Phi^1_1:=\vp_1(x,h_1+\eta_1(x,t),t)=(\vp_1)_{s_1},\quad\Phi^1_2:=\vp_1(x,\eta_2(x,t),t)=(\vp_1)_{s_2},\\
&\Phi^2_2:=\vp_2(x,\eta_2(x,t),t)=(\vp_2)_{s_2},\quad\Phi^2_3:=\vp_2(x,-h_3+\eta_3(x,t),t)=(\vp_2)_{s_3},\\
&\Phi^3:=\vp_3(x,-h_3+\eta_3(x,t),t)=(\vp_3)_{s_3},\\
&\xi_1:=\rho_1\Phi^1_1,\quad \xi_2:=\rho_2\Phi^2_2-\rho_1\Phi^1_2,\quad \xi_3:=\rho_3\Phi^3-\rho_2\Phi^2_3.
\end{split}
\end{equation}
In what follows it will be shown that the five traces $\Phi^i_j$ can be expressed by the three Hamiltonian ``momenta" $\xi_i$ only. The $\eta_i$ variables will play the role of Hamiltonian ``coordinates". This approach not only reduces the number of variables, but also the dimensionality of the problem, because instead of working in the bulk of the fluid (in the two dimensional spatial domain) one can work only with variables, defined on the surfaces and the interfaces, that is, in one spatial dimension.
 We note that definition \eqref{traces} arises naturally from the Legendre transform based on the Lagrangian, cf. formula (2.15) in Craig et al. \cite{CGS}.
\\
Utilizing the notation in \eqref{traces} we obtain from \eqref{Ber} that $\xi_i$ satisfy the evolution equations
\begin{equation}\label{E1}
\xi_{1,t} -\rho_1\eta_{1,t}(\vp_{1,y})_{s_1}+\frac{\rho_1}{2}|\nabla \vp_1|^2_{s_1}+g\rho_1\eta_1 =0,
\end{equation}
\begin{equation}\label{E2}
\setlength{\jot}{10pt}
\begin{split}
     \xi_{2,t}+\eta_{2,t}[\rho_1(\varphi_{1,y})_{s_2}-\rho_2(\varphi_{2,y})_{s_2}]+
     \rho_2\frac{|\nabla\varphi_2|^2_{s_2}}{2}-\rho_1\frac{|\nabla\varphi_1|^2_{s_2}}{2}\\
     +(\rho_2-\rho_1)g\eta_2=0,
   \end{split}
\end{equation}

\begin{equation}\label{E3}
\setlength{\jot}{10pt}
    \begin{split}
        \xi_{3,t}+\eta_{3,t}[\rho_2(\varphi_{2,y})_{s_3}-\rho_3(\varphi_{3,y})_{s_3}]+
\rho_3\frac{|\nabla\varphi_3|^2_{s_3}}{2}-\rho_2\frac{|\nabla\varphi_2|^2_{s_3}}{2}\\
+(\rho_3-\rho_2)g\eta_3=0.
    \end{split}
\end{equation}
\color{black}
\section{The Hamiltonian formulation of the governing equations and the Dirichlet-Neumann operators}\label{reform_DN}
\noindent In order to alleviate the difficulties brought about by the rich structure of the nonlinear water wave problem a reformulation of the latter is needed. This is the Hamiltonian formulation which provides a choice of new dependent and independent variables in which the equations take their simplest form.
\subsection{Hamiltonian form of the equations. Equations for the moving surfaces}
\noindent This section is devoted to a reformulation of the governing equations and of the boundary conditions \eqref{Euler}-\eqref{kin_cond} by collapsing the dynamical variables into a  one-dimensional representation. More precisely, we will rewrite \eqref{Euler}-\eqref{kin_cond} as a Hamiltonian system of partial differential equations of the form (cf. \cite{Olv})
$$\partial_t\mathfrak{m}=J\frac{\delta H}{\delta\mathfrak{m}},$$
where $t\mapsto\mathfrak{m}(t)$ denotes a path in a Hilbert space $\mathcal{H}$ equipped with an inner product $(\cdot,\cdot)$, the associated \emph{Hamiltonian functional} $H:\mathcal{D}\subset\mathcal{H}\rightarrow \mathbb{R}$ is defined on a dense subspace $\mathcal{D}$ of $\mathcal{H}$. In our considerations 
$\mathfrak{m}=(\eta_1,\eta_2,\eta_3,\xi_1,\xi_2,\xi_3)^T$, the Hilbert space is an appropriate Cartesian product of spaces of square-integrable functions and 
$\mathcal{D}=\mathcal{S}(\mathbb{R})\times\mathcal{S}(\mathbb{R})\times\mathcal{S}(\mathbb{R})\times\mathcal{S}(\mathbb{R})\times\mathcal{S}(\mathbb{R})\times\mathcal{S}(\mathbb{R})$, where $\mathcal{S}(\mathbb{R})$ is the subspace of Schwartz functions in the variable $x$.
Moreover, $J=\begin{pmatrix}0 & I_3\\-I_3 &0 \end{pmatrix}$ ($I_3$ being the $3\times3$ identity matrix), and defines a \emph{Poisson bracket} for functionals by the formula
$$\{F_1,F_2\}=\left(\frac{\delta F_1}{\delta\mathfrak{m}},J\frac{\delta F_2}{\delta\mathfrak{m}}\right).$$
We also denote by
$\frac{\delta H}{\delta\mathfrak{m}}$ the variational derivative of $H$ at $\mathfrak{m}\in\mathcal{D}$ through the formula
$$\lim_{\varepsilon\to 0}\frac{H(\mathfrak{m}+\varepsilon\mathfrak{m}_0)-H(\mathfrak{m})}{\varepsilon}=\left(\mathfrak{m}_0,\frac{\delta H}{\delta \mathfrak{m}}\right)\quad {\rm for}\quad \mathfrak{m}_0\in\mathcal{D},$$
where $(\cdot,\cdot)$ is the inner product in the Hilbert space $L^2(\mathbb{R})\times L^2(\mathbb{R})\times L^2(\mathbb{R})\times L^2(\mathbb{R})\times L^2(\mathbb{R})\times L^2(\mathbb{R})$.

\subsection{The Hamiltonian functional}
The natural candidate for the Hamiltonian functional is the total energy of the flow given as
$E_K+E_P$, where $E_K$ stands for the kinetic energy and $E_P$ denotes the potential energy. In terms of the quantities of the flow, the kinetic energy is written as
\begin{equation}
E_K=\int\int_{\Omega_1\cup\Omega_2\cup\Omega_3}\hat{\rho}\left(\frac{\hat{u}^2+\hat{v}^2}{2}\right)\, dydx
\end{equation}
which, due to stratification and \eqref{vel_pot}, can be detailed as
\begin{equation}\label{kin_energy}
\setlength{\jot}{10pt}
\begin{split}
E_K=&\int_{\mathbb{R}}\int_{-h}^{-h_3+\eta_3(x,t)}\rho_3\left(\frac{u_3^2+v_3^2}{2}\right)\,dy dx
+\int_{\mathbb{R}}\int_{-h_3+\eta_3(x,t)}^{\eta_2(x,t)}\rho_2\left(\frac{u_2^2+v_2^2}{2}\right)\,dy dx\\
&+\int_{\mathbb{R}}\int_{\eta_2(x,t)}^{h_1+\eta_1(x,t)}\rho_1\left(\frac{u_1^2+v_1^2}{2}\right)\,dy dx\\
=&\frac{\rho_3}{2}\int_{\mathbb{R}}\int_{-h}^{-h_3+\eta_3(x,t)}|\nabla\varphi_3|^2\,dy dx+
\frac{\rho_2}{2}\int_{\mathbb{R}}\int_{-h_3+\eta_3(x,t)}^{\eta_2(x,t)}|\nabla\varphi_2|^2\,dy dx\\
&+\frac{\rho_1}{2}\int_{\mathbb{R}}\int_{\eta_2(x,t)}^{h_1+\eta_1(x,t)}|\nabla\varphi_1|^2\,dy dx.\\
\end{split}
\end{equation}
In defining the potential energy we act on the realization that
 not all the latter energy can be converted into kinetic energy. However, the difference in the potential energy between the perturbed wave state and the pure background state is available for this conversion. Hence, we have
\begin{equation}\label{pot_energy}
\begin{split}
 E_P=&g\rho_3\int_{\mathbb{R}}\left(\int_{-h}^{-h_3+\eta_3}y\,dy-\int_{-h}^{-h_3}y\,dy\right)+g\rho_2\int_{\mathbb{R}}\left(\int_{-h_3+\eta_3}^{\eta_2}y\,dy-\int_{-h_3}^0y\,dy\right)\\[1em]
 &+g\rho_1\int_{\mathbb{R}}\left(\int_{\eta_2}^{h_1+\eta_1}y\,dy-\int_0^{h_1}y\,dy\right)\\[1em]
 =&\frac{g\rho_1}{2}\int_{\mathbb{R}}\eta_1^2\,dx+\frac{g(\rho_2-\rho_1)}{2}\int_{\mathbb{R}}\eta_2^2\,dx+\frac{g(\rho_3-\rho_2)}{2}\int_{\mathbb{R}}\eta_3^2\,dx.
  \end{split}  
\end{equation}
We would like to note that the above choice for the potential energy allows us to avoid any non-physical constant terms in the energy density and hence infinite energy. More specifically, the potential energy functional \eqref{pot_energy} is not convergent if the potential energy of the pure current background state is not subtracted.

\subsection{Computations of variations of $H$} We first list a few useful formulas for computing the variations of $H$.
\begin{rem}
\begin{enumerate}
\item If $F(x)=\int_{g_1(x)}^{g_2(x)}f(y)\,dy$ for given real functions of one variable $x\mapsto f(x)$, $x\mapsto g_1(x)$ and $x\mapsto g_2(x)$, then 
\begin{equation}\label{var_int}
    (\delta F)(x)=\int_{g_1(x)}^{g_2(x)}(\delta f)(y)\,dy+f(g_2(x))(\delta g_2)(x)-f(g_1(x))(\delta g_1)(x).
\end{equation}
\item For given functions $(x,y)\mapsto F(x,y)$, $x\mapsto f_i(x),\,(i=1,2)$ it holds
\begin{equation}\label{deriv_var_lim_int}
\setlength{\jot}{10pt}
    \begin{split}
        \frac{\partial}{\partial x}\left(\int_{f_1(x)}^{f_2(x)}F(x,y)\,dy\right)=&\int_{f_1(x)}^{f_2(x)}F_x(x,y)\,dy\\
       & +F(x,f_2(x))f_2'(x)-F(x,f_1(x))f_1'(x)
    \end{split}
\end{equation}
    \item If $\theta_1$ and $\theta_2$ are harmonic functions satisfying $\nabla\cdot(\theta_1\nabla\theta_2)=(\nabla\theta_1)\cdot(\nabla\theta_2)$, then for harmonic variations $\delta\varphi_i$ of $\varphi_i$ it holds that
    \begin{equation}\label{var_grad_sq}
    \delta\big((\nabla\varphi_i)\cdot (\nabla\varphi_i)\big)=2\nabla\cdot(\delta\varphi_i\nabla\varphi_i),
    \end{equation}
    since
    $|\nabla \varphi_i|^2=\nabla\cdot (\varphi_i \nabla \varphi_i)$ for $i=1,2,3$.
\end{enumerate}
\end{rem}
\begin{rem}
 Regarding variations of the traces on the interfaces of the velocity potentials the following formulas are true.
\begin{enumerate}
    \item On the free surface $y=h_1+\eta_1(x,t)$ we have
    \begin{equation}\label{var_pot_1}
    \setlength{\jot}{10pt}
        \begin{split}
        (\delta &\Phi_1^1)(x,t)=\\
        =&\lim_{\varepsilon\to 0}\frac{(\varphi_1+\varepsilon\delta\varphi_1)\big(x,h_1+\eta_1(x,t)+\varepsilon(\delta\eta_1)(x,t),t\big)-\varphi_1(x,h_1+\eta(x,t),t)}{\varepsilon}\\
        =&(\varphi_{1,y})_{s_1}\delta\eta_1+(\delta\varphi_1)_{s_1}
        \end{split}
        \end{equation}
    \item Analogously on the interface $y=\eta_2(x,t)$ it holds
        \begin{equation}\label{var_pot_2}
            \delta\Phi^1_2=(\varphi_{1,y})_{s_2}\delta\eta_2+(\delta\varphi_1)_{s_2}\quad{\rm and}\quad \delta\Phi_2^2=(\varphi_{2,y})_{s_2}\delta\eta_2+(\delta\varphi_2)_{s_2}
        \end{equation}
        \item Lastly, on the interface $y=-h_3+\eta_3(x,t)$ we have
        \begin{equation}\label{var_pot_3}
            \delta\Phi^2_3=(\varphi_{2,y})_{s_3}\delta\eta_3+(\delta\varphi_2)_{s_3}\quad {\rm and}\quad (\varphi_{3,y})_{s_3}\delta\eta_3+(\delta\varphi_3)_{s_3}
        \end{equation}
\end{enumerate}
\end{rem}
We proceed now to compute the variations of the terms that make up the kinetic energy $E_K$ from
\eqref{kin_energy}. Utilizing the divergence theorem (\cite{Eva}), formula \eqref{var_pot_3} as well as \eqref{var_int}, \eqref{deriv_var_lim_int} and \eqref{var_grad_sq} we obtain
\begin{equation}\label{var_grad1}
\setlength{\jot}{10pt}
\begin{split}
\delta\left(\int_{\mathbb{R}}\int_{-h}^{-h_3+\eta_3}|\nabla \varphi_3|^2\,dydx\right)=&2 \int_{\mathbb{R}}(\vp_{3,y}-\eta_{3,x}\vp_{3,x})_{s_3} \left(\delta\Phi^3-(\delta\eta_3)(\vp_{3,y})_{s_3}\right)dx\\
&+\int_{\mathbb{R}}|\nabla\vp_3|^2_{s_3}\delta\eta_3\, dx.
\end{split}
\end{equation}
Analogously, by the same token as above we have
\begin{equation}\label{var_grad2}
\setlength{\jot}{10pt}
\begin{split}
\delta\left(\int_{\mathbb{R}}\int_{-h_3+\eta_3}^{\eta_2}|\nabla \varphi_2|^2\,dydx\right)=&2 \int_{\mathbb{R}} (\vp_{2,y}-\eta_{2,x}\vp_{2,x})_{s_2} \left(\delta\Phi^2_2-(\delta\eta_2)(\vp_{2,y})_{s_2}\right) dx\\
&-2 \int_{\mathbb{R}} (\vp_{2,y}-\eta_{3,x}\vp_{2,x})_{s_3} \left(\delta\Phi^2_3-(\delta\eta_3)(\vp_{2,y})_{s_3}\right) dx\\
&+\int_{\mathbb{R}}|\nabla\vp_2|^2_{s_2}\delta\eta_2\,dx -\int_{\mathbb{R}}|\nabla\vp_2|^2_{s_3}\delta\eta_3\,dx,
\end{split}
\end{equation}
and 
\begin{equation}\label{var_grad3}
\setlength{\jot}{10pt}
\begin{split}
\delta\left(\int_{\mathbb{R}}\int_{\eta_2}^{h_1+\eta_1}|\nabla \varphi_1|^2\,dydx\right)=&2 \int_{\mathbb{R}} (\vp_{1,y}-\eta_{1,x}\vp_{1,x})_{s_1} \left(\delta\Phi^1_1-(\delta\eta_1)(\vp_{1,y})_{s_1}\right) dx\\
&-2 \int_{\mathbb{R}} (\vp_{1,y}-\eta_{2,x}\vp_{1,x})_{s_2} \left(\delta\Phi^1_2-(\delta\eta_2)(\vp_{1,y})_{s_2}\right) dx\\
&+\int_{\mathbb{R}}|\nabla\vp_1|^2_{s_1}\delta\eta_1\,dx -\int_{\mathbb{R}}|\nabla\vp_1|^2_{s_2}\delta\eta_2\,dx.
\end{split}
\end{equation}

Collecting together the variations of $H$ (from \eqref{var_grad1}-\eqref{var_grad3} and \eqref{pot_energy}) with respect to $\eta_1$, $\eta_2$ and $\eta_3$ respectively, we find that 
\begin{equation}\label{var_eta1}
\setlength{\jot}{10pt}
\begin{split}
\frac{\delta H}{\delta\eta_1}=&-\rho_1\eta_{1,t}(\vp_{1,y})_{s_1}+\frac{\rho_1}{2}|\nabla \vp_1|^2_{s_1}+g\rho_1\eta_1\\
=&-\rho_1\eta_{1,t}(\vp_{1,y})_{s_1}-\rho_1(\vp_{1,t})_{s_1}\\
=&-\frac{d}{dt}\left(\rho_1\vp_1(x,h_1+\eta_1(x,t),t)\right)\\
=&-\xi_{1,t},
\end{split}
\end{equation}

\begin{equation}\label{var_eta2}
\setlength{\jot}{10pt}
\begin{split}
\frac{\delta H}{\delta\eta_2}=&\rho_1\eta_{2,t}(\vp_{1,y})_{s_2}-\frac{\rho_1}{2}|\nabla\vp_1|^2_{s_2}-\rho_2\eta_{2,t}(\vp_{2,y})_{s_2}+\frac{\rho_2}{2}|\nabla\vp_2|^2_{s_2}+(\rho_2-\rho_1)g\eta_2\\
=&-\rho_2\left((\vp_{2,t})_{s_2}+\eta_{2,t}(\vp_{2,y})_{s_2}\right)+\rho_1\left((\vp_{1,t})_{s_2}+\eta_{2,t}(\vp_{1,y})_{s_2}\right)\\
=&-\frac{d}{dt}\left(\rho_2\vp_2(x,\eta_2(x,t),t)\right)+\frac{d}{dt}\left(\rho_1\vp_1(x,\eta_2(x,t),t)\right)\\
=&-\xi_{2,t},
\end{split}
\end{equation}
and
\begin{equation}\label{var_eta3}
\setlength{\jot}{10pt}
\begin{split}
\frac{\delta H}{\delta\eta_3}=&\rho_2\eta_{3,t}(\vp_{2,y})_{s_3}-\frac{\rho_2}{2}|\nabla\vp_2|^2_{s_3}-\rho_3\eta_{3,t}(\vp_{3,y})_{s_3}+\frac{\rho_3}{2}|\nabla\vp_3|^2_{s_3}+(\rho_3-\rho_2)g\eta_3\\
=&-\rho_3\left((\vp_{3,t})_{s_3}+\eta_{3,t}(\vp_{3,y})_{s_3}\right)+\rho_2\left((\vp_{2,t})_{s_3}+\eta_{3,t}(\vp_{2,y})_{s_3}\right)\\
=&-\frac{d}{dt}\left(\rho_3\vp_3(x,-h_3+\eta_3(x,t),t)\right)+\frac{d}{dt}\left(\rho_2\vp_2(x,-h_3+\eta_3(x,t),t)\right)\\
=&-\xi_{3,t}.
\end{split}
\end{equation}
Moreover, glancing at \eqref{traces} and utilizing \eqref{var_grad1}-\eqref{var_grad3} we infer that
\begin{equation}\label{var_xis}
\frac{\delta H}{\delta \xi_1}=\eta_{1,t},\quad \frac{\delta H}{\delta \xi_2}=\eta_{2,t},\quad \frac{\delta H}{\delta \xi_3}=\eta_{3,t}.
\end{equation}
We summarize the conclusions regarding the variations of $H$ computed in \eqref{var_eta1}, \eqref{var_eta2}, \eqref{var_eta3} and \eqref{var_xis} in the following theorem.
\begin{thm}
The governing equations and boundary conditions \eqref{Euler}-\eqref{kin_cond} posses the Hamiltonian formulation
\begin{equation}\label{Ham_syst}
\left\{
\begin{split}
\frac{\delta H}{\delta\eta_1}=-\xi_{1,t},\quad \frac{\delta H}{\delta \xi_1}=\eta_{1,t},\\
\frac{\delta H}{\delta\eta_2}=-\xi_{2,t},\quad \frac{\delta H}{\delta \xi_2}=\eta_{2,t},\\
\frac{\delta H}{\delta\eta_3}=-\xi_{3,t},\quad \frac{\delta H}{\delta \xi_3}=\eta_{3,t}.\\
\end{split}
\right.
\end{equation}
\end{thm}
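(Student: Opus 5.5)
The plan is to compute the first variation of $H=E_K+E_P$ with respect to independent variations $\delta\eta_i$ and $\delta\xi_i$, $i=1,2,3$, and identify the coefficients of each variation with the time derivatives given by the kinematic conditions and the Bernoulli relations \eqref{E1}--\eqref{E3}. First I fix the functional setting. Given $(\eta_i,\xi_i)$, the potentials $\vp_i$ are the harmonic functions in $\Omega_i$ satisfying: $\partial_n\vp_3=0$ at $y=-h$; continuity of the normal velocity across each interface, namely $(\vp_{1,y}-\eta_{2,x}\vp_{1,x})_{s_2}=(\vp_{2,y}-\eta_{2,x}\vp_{2,x})_{s_2}$ and the analogous relation on $s_3$; and the trace data \eqref{traces} defining $\xi_1,\xi_2,\xi_3$. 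I would note, or take as given, that this transmission problem is uniquely solvable, so that all five traces $\Phi^i_j$ are functionals of $(\eta,\xi)$. Consequently any variation $\delta\vp_i$ is harmonic, and \eqref{var_grad_sq} applies.

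Next I vary each layer's Dirichlet integral using \eqref{var_int}, \eqref{deriv_var_lim_int} and the divergence theorem. The bulk term $2\int\nabla\cdot(\delta\vp_i\nabla\vp_i)$ reduces to boundary integrals of $(\delta\vp_i)_{s}$ times the unnormalized normal derivative $(\vp_{i,y}-\eta_{j,x}\vp_{i,x})_s$. Moving-boundary terms contribute $|\nabla\vp_i|^2_s\delta\eta_j$. I then replace $(\delta\vp_i)_s$ by $\delta\Phi^i_j-(\vp_{i,y})_s\delta\eta_j$ using \eqref{var_pot_1}--\eqref{var_pot_3}, which gives \eqref{var_grad1}--\eqref{var_grad3}. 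At each interface the two normal-derivative factors coincide and equal $\eta_{j,t}$ by the kinematic conditions. Hence, after weighting by $\rho_i/2$ and summing, the $\delta\Phi$ terms on $s_2$ combine into $\eta_{2,t}(\rho_2\delta\Phi^2_2-\rho_1\delta\Phi^1_2)=\eta_{2,t}\,\delta\xi_2$, and similarly on $s_3$ and $s_1$. Setting $\delta\eta=0$ then yields $\delta H/\delta\xi_i=\eta_{i,t}$.

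For the $\eta$-variations I hold $\xi$ fixed, so the $\delta\Phi$ contributions still combine into $\eta_{j,t}\delta\xi_j$, which is zero. What remains is $-\rho\,\eta_{j,t}(\vp_y)_s\delta\eta_j+\tfrac{\rho}{2}|\nabla\vp|^2_s\delta\eta_j$ with jump signs, plus the $E_P$ contribution from \eqref{pot_energy}. Comparing with \eqref{E1}--\eqref{E3}, equivalently using the Bernoulli relations \eqref{Ber} and the chain rule $\frac{d}{dt}\vp(x,\eta(x,t),t)=(\vp_t)_s+\eta_t(\vp_y)_s$, gives $\delta H/\delta\eta_i=-\xi_{i,t}$, as in \eqref{var_eta1}--\eqref{var_eta3}.

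The main obstacle is the justification that the $\delta\Phi$ terms combine exactly into $\delta\xi_j$. Varying $\xi_j$ and $\eta_j$ changes all five traces, not only the two entering $\xi_j$. Cancellation requires both that the two normal fluxes agree at each interface and that no extra contribution arises from the induced change in the individual traces. I would handle this by the flux-continuity argument above, together with the observation that only the combinations $\rho_{i+1}\delta\Phi^{i+1}-\rho_i\delta\Phi^i$ appear. I would also check that boundary terms at $x\to\pm\infty$ vanish by the Schwartz decay.
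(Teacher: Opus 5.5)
Your proposal is correct and follows essentially the same route as the paper. You vary each layer's Dirichlet integral to obtain \eqref{var_grad1}--\eqref{var_grad3}, and you use flux continuity together with the kinematic conditions so that the $\delta\Phi$ terms assemble exactly into $\eta_{j,t}\,\delta\xi_j$. You then identify the remaining $\delta\eta_j$ coefficients, together with the potential-energy contribution, with $-\xi_{j,t}$ via \eqref{E1}--\eqref{E3}. Your explicit setup of the transmission problem, which makes all five traces functionals of $(\eta,\xi)$, is a useful clarification that the paper leaves implicit at this stage.
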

\begin{rem}
 The (equivalent) Hamiltonian (re)formulation of the water wave problem not only reduces the number of the dynamical variables, but (together with the Dirichlet-Neumann machinery) also offers the possibility to derive (simpler) nonlinear model equations directly from the full nonlinear water waves equations in multi-layer flows with a free surface: as a result, the coefficients in these equations (and also their solutions) depend explicitly on physical parameters on the multi-layer system, cf. \cite{CGK, CGS, CICMP}. More precisely, from the a single overarching principle--the Hamiltonian formulation--one can derive model equations pertaining to various physical regimes by simply defining suitable scaling regimes in terms of certain ratios like amplitude/layer depth and/or amplitude/wavelength, cf. e.g. \cite{CGK, CGS, CICMP}. This is in stark contrast with the usual case-by-case analysis \cite{Whi}.
Furthermore, the Hamiltonian perspective is very suited to perturbation analysis since it guarantees that if appropriate symmetries are preserved in the approximation process, then conservation properties of the exact system are retained in its approximations. 
\end{rem}
\subsection{The Dirichlet-Neumann operators}

The Dirichlet-Neumann (DN) operator corresponding to the uppermost layer is defined as
\begin{equation}\label{dn_u}
\begin{pmatrix}G_{11} & G_{12}\\ G_{21} & G_{22}\end{pmatrix}\begin{pmatrix}(\varphi_{1})_{s_2}\\ (\varphi_{1})_{s_1}\end{pmatrix}=
\begin{pmatrix} -\sqrt{1+\eta_{2,x}^2}\left(\frac{\partial \varphi_1}{\partial \mathbf{n}_2} \right)_{s_2} \\
\sqrt{1+\eta_{1,x}^2}\left(\frac{\partial \varphi_1}{\partial \mathbf{n}_1} \right)_{s_1}
\end{pmatrix}
\end{equation}
where $\mathbf{n}_2=\frac{(-\eta_{2,x},1)}{\sqrt{1+\eta_{2,x}^2}}$ and $\mathbf{n}_1=\frac{(-\eta_{1,x},1)}{\sqrt{1+\eta_{1,x}^2}}$ are the unit normal vectors to the upper interface $y=\eta_2(x,t)$  and to the top surface $y=-h_1+\eta_1(x,t)$, respectively. 
The DN operator corresponding to the middle layer is
\begin{equation}\label{dn_m}
\begin{pmatrix}\Gamma_{11} & \Gamma_{12}\\ \Gamma_{21} &\Gamma_{22}\end{pmatrix}\begin{pmatrix}(\varphi_2)_{s_2}\\ (\varphi_2)_{s_3}\end{pmatrix}=
\begin{pmatrix} \sqrt{1+\eta_{2,x}^2}\left(\frac{\partial \varphi_2}{\partial \mathbf{n}_2} \right)_{s_2} \\
-\sqrt{1+\eta_{3,x}^2}\left(\frac{\partial \varphi_2}{\partial \mathbf{n}_3} \right)_{s_3}
\end{pmatrix}
\end{equation}
where $\mathbf{n}_3=\frac{(-\eta_{3,x},1)}{\sqrt{1+\eta_{3,x}^2}}$ denotes the outward pointing unit normal vector corresponding to the lower interface $y=-h_3+\eta_3(x,t)$. 
Moreover, the DN operator corresponding to the bottom layer is
\begin{equation}\label{dn_b}
 G(\eta_3)(\varphi_3)_{s_3}=\sqrt{1+\eta_{3,x}^2}\left(\frac{\partial\varphi_3}{\partial \mathbf{n}_3}\right)_{s_3}=(\nabla\varphi_3)_{s_3}\cdot(-\eta_{3,x},1).
\end{equation}

These DNOs for each domain $\Omega_1,$ $\Omega_2,$  $\Omega_3$ could be obtained recursively in the form of asymptotic series in $\eta_1, \, \eta_2$ and $\eta_3$ following the technique described in the Appendix of \cite{CGK}, see also Appendix A of \cite{IMT}.

Appealing to the divergence theorem (\cite{Eva}) the kinetic energy of the system \eqref{kin_energy} can be rewritten as
\begin{equation}
\begin{split}
 \rm{E_K}=&\frac{\rho_1}{2}\int(\varphi_1)_{s_1}\left(\nabla\varphi_1\right)_{s_1}\cdot (-\eta_{1,x},1)\,dx
 -\frac{\rho_1}{2}\int(\varphi_1)_{s_2}\left(\nabla\varphi_1\right)_{s_2}\cdot (-\eta_{2,x},1)\,dx\\
&+ \frac{\rho_2}{2}\int(\varphi_2)_{s_2}\left(\nabla\varphi_2\right)_{s_2}\cdot (-\eta_{2,x},1)\,dx
-\frac{\rho_2}{2}\int(\varphi_2)_{s_3}\left(\nabla\varphi_2\right)_{s_3}\cdot (-\eta_{3,x},1)\,dx\\
&+\frac{\rho_3}{2}\int(\varphi_3)_{s_3}\left(\nabla\varphi_3\right)_{s_3}\cdot (-\eta_{3,x},1)\,dx.
 \end{split}
\end{equation}
With the help of the DN operators, the kinetic energy can be written as
\begin{equation}\label{Kin}
 \begin{split}
\rm{E_K}=&\frac{\rho_1}{2}\int(\varphi_1)_{s_1}\left[G_{21}(\varphi_1)_{s_2}+G_{22}(\varphi_1)_{s_1}\right]\,dx  \\
&+\frac{\rho_1}{2}\int(\varphi_1)_{s_2}\left[G_{11}(\varphi_1)_{s_2}+G_{12}(\varphi_1)_{s_1}\right]\,dx\\
&+\frac{\rho_2}{2}\int (\varphi_2)_{s_2}\left[\Gamma_{11}(\varphi_2)_{s_2} +\Gamma_{12}(\varphi_2)_{s_3}\right]\,dx\\
&+\frac{\rho_2}{2}\int (\varphi_2)_{s_3}\left[\Gamma_{21}(\varphi_2)_{s_2} +\Gamma_{22}(\varphi_2)_{s_3}\right]\,dx\\
&+\frac{\rho_3}{2}\int(\varphi_3)_{s_3} G(\eta_3)(\varphi_3)_{s_3}\,dx.
 \end{split}
\end{equation}
Attempting to simplify the expression \eqref{Kin} we note first that, from the kinematic condition at the upper interface $y=\eta_2(x,t)$, we have $$\eta_{2,t}=(\nabla\varphi_2)_{s_2}\cdot(\eta_{2,x},-1)=(\nabla\varphi_1)_{s_2}\cdot (\eta_{2,x},-1),$$ which, in terms of the DN operators, becomes
\begin{equation}\label{GammaGij}
 \eta_{2,t}=-(\Gamma_{11}(\varphi_2)_{s_2}+\Gamma_{12}(\varphi_2)_{s_3})=G_{11}(\varphi_1)_{s_2}+G_{12}(\varphi_1)_{s_1}.
\end{equation}
Moreover, 
\begin{equation}
 \eta_{3,t}=(\varphi_{2,y}-\varphi_{2,x}\eta_{3,x})_{s_3}=(\varphi_{3,y}-\varphi_{3,x}\eta_{3,x})_{s_3},
 \end{equation}
 which implies
 \begin{equation}\label{GammaG}
 \eta_{3,t}=G(\eta_3)(\varphi_3)_{s_3}=-[\Gamma_{21}(\varphi_2)_{s_2} +\Gamma_{22}(\varphi_2)_{s_3}].
\end{equation}
Recalling that 
\begin{equation}\label{xiDef}
   \xi_1=\rho_1(\varphi_1)_{s_1},\,\,\xi_2=\rho_2 (\varphi_2)_{s_2}-\rho_1 (\varphi_1)_{s_2},\,\,\xi_3=\rho_3 (\varphi_3)_{s_3}-\rho_2(\varphi_2)_{s_3},
    \end{equation}
    and employing \eqref{GammaGij} and \eqref{GammaG} we see that the kinetic energy can be simplified as 
\begin{equation}\label{EK}
\begin{split}
{\rm E_{K}}=&\frac{1}{2}\int\xi_1\left[G_{21}(\varphi_1)_{s_2}+G_{22}(\varphi_1)_{s_1}\right]\,dx  \\
&+\frac{1}{2}\int\xi_2\left[\Gamma_{11}(\varphi_2)_{s_2} +\Gamma_{12}(\varphi_2)_{s_3}\right]\,dx\\
&+\frac{1}{2}\int\xi_3 G(\eta_3)(\varphi_3)_{s_3}\,dx.
 \end{split}
\end{equation} 
Moreover, from \eqref{GammaGij} and \eqref{GammaG} we also obtain the system
\\
\begin{align}  
\xi_2 &=\rho_2 (\varphi_2)_{s_2}+\rho_1 G_{11}^{-1}[\Gamma_{11}(\varphi_2)_{s_2} +\Gamma_{12}(\varphi_2)_{s_3}+\rho_1^{-1}G_{12}\xi_1]\\
\xi_3&=-\rho_2 (\varphi_2)_{s_3}-\rho_3 G^{-1}(\eta_3)[\Gamma_{21}(\varphi_2)_{s_2} +\Gamma_{22}(\varphi_2)_{s_3}] \label{xi_xi2_syst}
\end{align}
\\
which can be rearranged as
\\
\begin{equation}\label{QEq}
\setlength{\jot}{10pt}
\begin{split}
(\rho_2+\rho_1 G_{11}^{-1}\Gamma_{11})(\varphi_2)_{s_2}+(\rho_1 G_{11}^{-1}\Gamma_{12})(\varphi_2)_{s_3} &=\xi_2-G_{11}^{-1}G_{12}\xi_1\\
(-\rho_3 G^{-1}(\eta_3)\Gamma_{21})(\varphi_2)_{s_2} -(\rho_2 +\rho_3 G^{-1}(\eta_3)\Gamma_{22})(\varphi_2)_{s_3} &=\xi_3,
\end{split}
\end{equation}
where the superscript $``^{-1}"$ in the previous system (and in the subsequent equations) is used to refer to the inverse of the operator in question.\\
Taking into account the fact that \eqref{QEq} is a linear system of equations with operator-valued coefficients, we obtain
\begin{equation}\label{solQEq}
\begin{split}
&(\varphi_2)_{s_2}=\mathcal{D}^{-1}\big( \rho_1 G \xi_3 + (\rho_2 G+\rho_3\Gamma_{22})\Gamma_{12}^{-1}(G_{11}\xi_2-G_{12}\xi_1) \big)\equiv \Phi_2^2(\eta_i, \xi_i) \\
&(\varphi_2)_{s_3}=-(\mathcal{D}^*)^{-1}\big( \rho_3(G_{11}\xi_2-G_{12}\xi_1)  +  (\rho_2 G_{11}+\rho_1\Gamma_{11})\Gamma_{21}^{-1} G \xi_3 \big),
\end{split}
\end{equation}
where $\mathcal{D}:=\big( (\rho_2 G+\rho_3\Gamma_{22})\Gamma_{12}^{-1}(\rho_2 G_{11}+\rho_1\Gamma_{11})-\rho_1 \rho_3 \Gamma_{21} \big)$
and the star $(*)$ notation denotes the operation of conjugation \footnote{The conjugation is with respect to the standard real bilinear form $(f,g)=\int_{\mathbb{R}} f(x) g(x) dx.$  }. We used also the fact that $G_{ij}=G^*_{ji},$ $\Gamma_{ij}=\Gamma^*_{ji},$ see for example\cite{CGK}.
Moreover, from \eqref{xiDef} we also have
\begin{equation}\label{phi13surf}
\begin{split}
&(\varphi_1)_{s_2}=\frac{1}{\rho_1}\mathcal{D}^{-1}\big( \rho_1 (\rho_2 G \xi_3 +\rho_3\Gamma_{21} \xi_2)- (\rho_2 G+\rho_3\Gamma_{22})\Gamma_{12}^{-1}(\rho_2 G_{12}\xi_1+\rho_1 \Gamma_{11}\xi_2) \big) \equiv \Phi^1_2(\eta_i, \xi_i) \\
&(\varphi_3)_{s_3}=-(\mathcal{D}^*)^{-1}\big( \rho_2 G_{11}\xi_2-\rho_2 G_{12}\xi_1+\rho_1 \Gamma_{12}\xi_3  \tcr{-}  (\rho_2 G_{11}+\rho_1\Gamma_{11})\Gamma_{21}^{-1} \Gamma_{22} \xi_3 \big).
\end{split}
\end{equation}
The previous expressions of the surface values of the potentials together with relation \eqref{pot_energy} allow us to express the Hamiltonian as a functional of the quantities $\eta_i,\,\xi_i, (i=1,2,3)$, that is, we have
\begin{equation}\label{HamGen}
\begin{split}
H&={\rm E_{K}(\xi_i)}+V(\eta_i)\\
&=\frac{1}{2}\int \xi_1 A_{11} \xi_1 \,dx + \frac{1}{2}\int \xi_2 A_{22} \xi_2 \,dx+\frac{1}{2}\int \xi_3 A_{33} \xi_3 \,dx \\ 
&+ \int \xi_1 A_{12} \xi_2 \, dx + \int \xi_1 A_{13} \xi_3 \, dx + \int \xi_2 A_{23} \xi_3 \, dx    \\
&+\frac{1}{2} \int \left( g\rho_1 \eta_1^2 +g(\rho_2-\rho_1) \eta_2^2 + g(\rho_3-\rho_2) \eta_3^2\right) \, dx,
\end{split}
\end{equation}
where the last integral is the potential energy $V$ and the operators $A_{ij} (i,j=1,2,3)$, written in terms of the DN operators, are
\begin{equation}\label{Aij}
\begin{split}
&A_{11}=\frac{1}{\rho_1}G_{22}- \frac{\rho_2}{\rho_1} G_{21}\mathcal{D}^{-1}(\rho_2 G  +\rho_3 \Gamma_{22})\Gamma_{12}^{-1}G_{12}   ,\\
&A_{22}= \Gamma_{11} \mathcal{D}^{-1} (\rho_2 G + \rho_3 \Gamma_{22})\Gamma_{12}^{-1}G_{11}-\rho_3 G_{11} \mathcal{D}^{-1} \Gamma_{21},     \\
& A_{33}= \Big(  \Gamma_{22} \Gamma_{12}^{-1}(\rho_2 G_{11}+\rho_1 \Gamma_{11}) -\rho_1 \Gamma_{21} \Big)\mathcal{D}^{-1} G,\\
&A_{12}=\rho_3 G_{21} \mathcal{D}^{-1} \Gamma_{21} - \frac{1}{2} G_{21}\Big(  \mathcal{D}^{-1}(\rho_2 G + \rho_3 \Gamma_{22})\Gamma_{12}^{-1}+
\Gamma_{21}^{-1} (\rho_2 G + \rho_3 \Gamma_{22}) (\mathcal{D}^*)^{-1} \Big)\Gamma_{11},\\
&A_{13}=\frac{1}{2}\rho_2 G_{21} \Big(\mathcal{D}^{-1}+(\mathcal{D}^*)^{-1}\Big)G,\\
&A_{23}=\frac{1}{2}\Big( (\rho_1 \Gamma_{11}-\rho_2 G_{11}) \mathcal{D}^{-1} - \Gamma_{12} (\mathcal{D}^*)^{-1}
(\rho_2 G_{11}+\rho_1 \Gamma_{11})\Gamma_{21}^{-1}\Big) G.
\end{split}
\end{equation}
The Hamiltonian equations can be represented, via the DN operators, with the help of the quantities $(\varphi_{1,x}(x,y,t))_{s_1}$ and  $(\varphi_{1,y}(x,y,t))_{s_1}$. From \eqref{HamGen} and because the operators $A_{ij}$ do not depend on $\xi_i$, we obtain (in terms of the Hamiltonian variables) the evolution equation
\begin{equation}\label{eta1}
    \eta_{1,t}=\frac{\delta H}{\delta \xi_1}=A_{11}\xi_1+A_{12}\xi_2+A_{13}\xi_3:=F_1(\eta_i,\xi_i).
\end{equation}
In addition, by \eqref{HamGen} and \eqref{Ham_syst},  $(\varphi_{1,x}(x,y,t))_{s_1}$ and  $(\varphi_{1,y}(x,y,t))_{s_1}$ satisfy the system
\begin{equation}\label{phi_surf}
\begin{split}
 &(\varphi_{1,x})_{s_1}+(\varphi_{1,y})_{s_1} \eta_{1,x}= \frac{1}{\rho_1}\xi_{1,x},\\
  &(\varphi_{1,y})_{s_1}-(\varphi_{1,x})_{s_1}\eta_{1,x}=\eta_{1,t}=F_1(\eta_i,\xi_i).
  \end{split}
\end{equation}
Solving the system we find
\begin{equation}\label{phi_surf sol}
\begin{split}
 & (\varphi_{1,x})_{s_1}=  \frac{\xi_{1,x}-\rho_1\eta_{1,x} F_1}{\rho_1(1+\eta_{1,x}^2)}  ,   \\
  &(\varphi_{1,y})_{s_1}= \frac{\eta_{1,x}\xi_{1,x}+\rho_1 F_1}{\rho_1(1+\eta_{1,x}^2)}.
  \end{split}
\end{equation}
which allow us to write equation \eqref{E1} through the Hamiltonian variables.
More precisely, we have

\begin{equation}\label{E11}
\xi_{1,t} + \frac{\xi_{1,x}^2 -2\rho_1\eta_{1,x}\xi_{1,x}F_1- \rho_1^2F_1^2}{2\rho_1(1+\eta_{1,x}^2)} +g\rho_1\eta_1 =0.
\end{equation}

The counterpart equation for $\eta_1$ is of course \eqref{eta1},

\begin{equation}\label{E12}
\eta_{1,t} = F_1(\xi_i,\eta_i).
\end{equation}



From \eqref{HamGen}, where the operators $A_{ij}$ do not depend on $\xi_i$, we have
\begin{equation}\label{eta2}
    \eta_{2,t}=\frac{\delta H}{\delta \xi_2}= A_{12}^* \xi_1+A_{22}\xi_2+A_{23} \xi_3 =: F_2(\eta_i,\xi_i).
\end{equation} 
Thus $F_2(\eta_i,\xi_i)$ is defined through the canonical variables.
Next, from \eqref{solQEq} we have $(\varphi_2)_{s_2}\equiv \Phi_2^2(\eta_i, \xi_i)$ as a known quantity, defined through the canonical variables.
Since \begin{equation}
    \frac{d}{dx} \Phi_2^2= (\varphi_{2,x})_{s_2}+(\varphi_{2,y})_{s_2}\eta_{2,x}
\end{equation}
and from \eqref{eta2t}
\begin{equation}
    F_2=\eta_{2,t}=(\varphi_{2,y})_{s_2}-(\varphi_{2,x})_{s_2}\eta_{2,x},
\end{equation}
we have a system,
\begin{align}
& (\varphi_{2,x})_{s_2}+ \eta_{2,x}(\varphi_{2,y})_{s_2}=(\Phi^2_2)_x \\
-&\eta_{2,x} (\varphi_{2,x})_{s_2}+(\varphi_{2,y})_{s_2}=F_2
\end{align}
which gives
\begin{align}
    (\varphi_{2,x})_{s_2}&= \frac{(\Phi^2_2)_x-\eta_{2,x}F_2}{1+\eta_{2,x}^2 }
        \\
        (\varphi_{2,y})_{s_2}&= \frac{\eta_{2,x}(\Phi^2_2)_x +F_2}{1+\eta_{2,x}^2 }
\end{align}
and similarly,
\begin{align}
    (\varphi_{1,x})_{s_2}&= \frac{(\Phi^1_2)_x-\eta_{2,x}F_2}{1+\eta_{2,x}^2 }
        \\
        (\varphi_{1,y})_{s_2}&= \frac{\eta_{2,x}(\Phi^1_2)_x +F_2}{1+\eta_{2,x}^2 }
\end{align}
where $\Phi^1_2$ is defined in \eqref{phi13surf}.
Therefore, $\xi_2$ satisfies the equation
\begin{equation} \label{E22}
    \xi_{2,t}+\frac{\rho_2 (\Phi^2_{2,x})^2-\rho_1 (\Phi_{2,x}^1)^2-2F_2\xi_{2,x}\eta_{2,x}+(\rho_1-\rho_2)F_2^2   }{2(1+\eta_{2,x}^2)}+(\rho_2-\rho_1)g \eta_2 =0.
\end{equation}
Similarly, for the last pair of equations we obtain
\begin{equation}\label{eta3}
    \eta_{3,t}=\frac{\delta H}{\delta\xi_3}=A_{13}^{*}\xi_1 +A_{23}^{*}\xi_2+A_{33}\xi_3:=F_3(\eta_i,\xi_i)
\end{equation}
and
\begin{equation}\label{E33}
     \xi_{3,t}+\frac{\rho_3 (\Phi^3_{x})^2-\rho_2 (\Phi_{3,x}^2)^2-2F_3\xi_{3,x}\eta_{3,x}+(\rho_2-\rho_3)F_3^2   }{2(1+\eta_{3,x}^2)}+(\rho_3-\rho_2)g \eta_3 =0.
\end{equation}
We note that equations \eqref{E11}, \eqref{E22}, \eqref{E33}, \eqref{E12}, \eqref{eta2}, \eqref{eta3} render a closed system of six equations for the six variables $(\eta_i,\xi_i), $ $i=1,2,3$. All quantities in these equations are given in terms of the Dirichlet-Neumann operators. Most importantly, equations \eqref{E11}, \eqref{E22}, \eqref{E33}, \eqref{E12}, \eqref{eta2}, \eqref{eta3} also suggest the pattern for more superimposed layers. 

 





\section{The Dirichlet-Neumann operators and the dispersion relation}\label{disp}
\subsection{The derivation of the dispersion relation}
A significant role in the derivation of the dispersion relation will be played by the Dirichlet-Neumann operators introduced in \eqref{dn_u}, \eqref{dn_m} and \eqref{dn_b}. The DN operators are analytic in their dependence on $(\eta_1,\eta_2)$, $(\eta_2,\eta_3)$, and $\eta_3$, respectively, and have convergent Taylor series expansions
\begin{equation}
\begin{pmatrix}
G_{11}(\eta_1,\eta_2) & G_{12}(\eta_1,\eta_2)\\
G_{21}(\eta_1,\eta_2) & G_{22}(\eta_1,\eta_2)
\end{pmatrix}=\sum_{k_1,k_2=0}^{\infty}\begin{pmatrix}
G_{11}^{(k_1 k_2)}(\eta_1,\eta_2) & G_{12}^{(k_1 k_2)}(\eta_1,\eta_2)\\
G_{21}^{(k_1 k_2)}(\eta_1,\eta_2) & G_{22}^{(k_1 k_2)}(\eta_1,\eta_2)
\end{pmatrix},
\end{equation}

\begin{equation}
\begin{pmatrix}
\Gamma_{11}(\eta_2,\eta_3) & \Gamma_{12}(\eta_2,\eta_3)\\
\Gamma_{21}(\eta_2,\eta_3) & \Gamma_{22}(\eta_2,\eta_3)
\end{pmatrix}=\sum_{p_1,p_2=0}^{\infty}\begin{pmatrix}
\Gamma_{11}^{(p_2 p_3)}(\eta_2,\eta_3) & \Gamma_{12}^{(p_2 p_3)}(\eta_2,\eta_3)\\
\Gamma_{21}^{(p_2 p_3)}(\eta_2,\eta_3) & \Gamma_{22}^{(p_2 p_3)}(\eta_2,\eta_3)
\end{pmatrix},
\end{equation}
\begin{equation}
    G(\eta_3)=\sum_{l=0}^{\infty}G^{(l)}(\eta_3),
\end{equation}
with each linear operator $G_{ij}^{(k_1 k_2)}(\eta_1,\eta_2)$ homogeneous of degree $k_1$ in $\eta_1$ and of degree $k_2$ in $\eta_2$, for $i,j=1,2$, each linear operator $\Gamma_{ij}^{(p_2 p_3)}(\eta_2,\eta_3)$ homogeneous of degree $p_2$ in $\eta_2$ and of degree $p_3$ in $\eta_3$, for $i,j=1,2$ and each linear operator $G^{(l)}(\eta_3)$ homogeneous of degree $l$ in $\eta_3$, cf. \cite{CGK, CGS}.
Each of the operators $G_{ii}^{(k_1 k_2)}(\eta_1,\eta_2)$ with $i=1,2$ and $k_1,k_2\geq 0$, $\Gamma_{ii}^{(p_2 p_3)}(\eta_2,\eta_3)$ with $i=1,2$ and $p_1,p_2\geq 0$, and $G^{(l)}(\eta_3)$ with $l\geq 0$ is self-adjoint.
\begin{rem}
    The DN operators can be understood as certain pseudodifferential operators. More precisely, 
    let $m$ be a complex-valued function of real variable whose derivatives of any order have polynomial growth and setting $$D:=-i\partial_x$$ we define
    \begin{equation}
        (m(D)f)(x):=\frac{1}{2\pi}\int\int e^{ik(x-y)}m(k)f(y)dy\,dk.
    \end{equation}
    The operator $m(D)$ is called a Fourier multiplier operator and maps $\mathcal{S}(\mathbb{R})$ into
    $\mathcal{S}(\mathbb{R})$. Furthermore, 
    \begin{enumerate}
   \item $m(D)$ extends to a self-adjoint operator in $L^2(\mathbb{R})$ if and only if $m$ is real valued, cf. \cite{Reed}.
   \item $m(D)$ is bounded if and only if $m\in L^{\infty}(\mathbb{R})$
   \end{enumerate}
\end{rem}

Then, the leading order terms of the DN operators are given (cf. \cite{CGK}) by means of Fourier multipliers as
\begin{equation}
\begin{pmatrix}
G_{11}^0 & G_{12}^0\\
G_{21}^0 & G_{22}^0
\end{pmatrix}
= \begin{pmatrix}
D\coth(h_1 D) & -D\csch(h_1 D)\\
-D\csch(h_1 D)  & D\coth(h_1 D)
\end{pmatrix},
\end{equation}
 \begin{equation}\label{Eq:Gamma}
\begin{pmatrix}
\Gamma_{11}^0 & \Gamma_{12}^0\\
\Gamma_{21}^0 & \Gamma_{22}^0
\end{pmatrix}
=\begin{pmatrix}D\coth(h_3 D) & -D\csch(h_3 D)\\
-D\csch(h_3 D)  & D\coth(h_3 D)
\end{pmatrix}
\end{equation}
and 
\begin{equation}\label{Eq:G}
G^0=D\tanh(h_b D),
\end{equation} where $h_b:=h-h_3$ is the mean depth of the bottom layer, while $h_3$ is the mean depth of the middle layer.

Under consideration will be monochromatic solutions of the leading order linear equations, meaning that all components of these solutions are proportional to $\exp(ikx),$ where $k:=\frac{2\pi}{L}$ represents the wave number, $L$ being the corresponding wavelength. The latter means that each component is an eigenfunction of the operator $D$ with eigenvalue $k$. These facts allow us to replace the operator $D$ by its eigenvalue $k$ when acting on the previously mentioned monochromatic solutions. Consequently,
leading order approximations of the operators $A_{ij}(k)$ $(i,j=1,2,3)$, which are denoted by $A_{ij}^0(k)$ $(i,j=1,2,3)$, acquire the form

\begin{equation}\label{Aijk}
\begin{split}
&A_{11}^0(k)= \frac{k}{\rho_1}\coth(kh_1)-\frac{k\rho_2\Big( \rho_3\cosh(kh_3)\cosh(kh_b)+\rho_2 \sinh(kh_3)\sinh(k h_b) \Big)}{\rho_1 \sinh(kh_1) \Delta(k)}  ,\\
&A_{22}^0(k)= \frac{k}{\Delta(k)} \cosh(k h_1) \Big (\rho_2 \cosh(k h_3) \sinh(k h_b) + \rho_3\sinh(k h_3) \cosh(k h_b) \Big ),     \\
& A_{33}^0(k)= \frac{k}{\Delta(k)}\sinh(k h_b) \Big(\rho_1 \sinh(k h_1) \sinh(k h_3) + \rho_2 \cosh(k h_3) \cosh(k h_1)\Big),\\
&A_{12}^0(k)=  \frac{k}{\Delta(k)}\Big( \rho_2\cosh(kh_3)\sinh(kh_b)+\rho_3 \sinh(kh_3)\cosh(k h_b)    \Big) ,\\
&A_{13}^0(k)=  \frac{ k \rho_2 }{ \Delta(k)}   \sinh(k h_b),\\
&A_{23}^0(k)= \frac{k \rho_2 }{\Delta(k)}  \sinh(k h_b) \cosh(k h_1)   .
\end{split}
\end{equation}

where \begin{equation}\label{den}
\begin{split}
\Delta(k)&= \rho_2\Big(\rho_3 \cosh(k h_3) \cosh(k h_b)+\rho_2 \sinh(kh_3)\sinh(kh_b)\Big)\cosh(kh_1) \\
& +   \rho_1\Big(\rho_2 \cosh(k h_3) \sinh(k h_b)+\rho_3 \sinh(kh_3)\cosh(kh_b)\Big)\sinh(kh_1).  \\
\end{split}
\end{equation}
Note that 
\begin{equation}
\begin{split}
   \Delta(k)&>\rho_1^2\Big( \cosh(k h_3) \cosh(k h_b)+ \sinh(kh_3)\sinh(kh_b)\Big)\cosh(kh_1) \\
&+   \rho_1^2\Big(\cosh(k h_3) \sinh(k h_b)+ \sinh(kh_3)\cosh(kh_b)\Big)\sinh(kh_1) \\
&=\rho_1^2\Big(\cosh(kh_3+kh_b)\cosh(kh_1)+\sinh(kh_3+kh_b)\sinh(kh_1)\Big)\\
&=\rho_1^2\cosh(k(h_3+h_b+h_1))=\rho_1^2\cosh(k(h+h_1))>0.
\end{split}
\end{equation}
We observe that in the leading order all $A^0_{ij}(k)$ are even functions of $k$ and therefore the operators 
$A^0_{ij}(D)$ are self-conjugate operators. Thus, from \eqref{eta1}, \eqref{eta2} and \eqref{eta3} we have
\begin{equation}\label{alleta}
\begin{split}
\eta_{1,t}&=A^0_{11}(D)\xi_1+A^0_{12}(D)\xi_2+A^0_{13}(D)\xi_3,\\
\eta_{2,t}&=A_{12}^0 (D)\xi_1+A^0_{22}(D)\xi_2+A^0_{23}(D) \xi_3,\\
\eta_{3,t}&=A_{13}^{0}(D)\xi_1 +A_{23}^{0}(D)\xi_2+A^0_{33}(D)\xi_3.
\end{split}
\end{equation}
On the other hand, retaining only the linear terms in \eqref{E11}, \eqref{E22} and \eqref{E33} we obtain the equations
\begin{equation}\label{var_H_eta}
\begin{split}
\xi_{1,t}&=-g\rho_1\eta_1,\\
\xi_{2,t}&=-g(\rho_2-\rho_1)\eta_2\\
\xi_{3,t}&=-g(\rho_3-\rho_2)\eta_3.
\end{split}
\end{equation}
Denoting now $\mathfrak{m}=(\eta_1,\eta_2,\eta_3,\xi_1,\xi_2,\xi_3)^{T}$ we will solve the linear system of equations \eqref{alleta} - \eqref{var_H_eta} by utilising the Fourier transform:
setting $\hat{f}(k)=\frac{1}{\sqrt{2\pi}}\int_{\mathbb{R}}e^{-ikx}f(x)\,dx$ for $f\in\mathcal{S}(\mathbb{R})$, in each component of $\mathfrak{m}$, we see (after taking into account \eqref{Ham_syst},\eqref{HamGen} and \eqref{var_H_eta}) that the system \eqref{Ham_syst} is transformed
for any fixed $k\in\mathbb{R}$ into the linear autonomous system of ordinary differential equations
\begin{equation}\label{evol_syst}
\partial_t\hat{\mathfrak{m}}(k,t))=\mathcal{M}(k)\hat{\mathfrak{m}}(k,t)
\end{equation}
where 
\begin{equation}
\mathcal{M}(k)=\begin{pmatrix}
0 & 0 & 0 & A^0_{11}(k) & A^0_{12}(k) & A^0_{13}(k)\\
0 & 0 & 0 & A^0_{12}(k) & A^0_{22}(k) & A^0_{23}(k)\\
0 & 0 & 0 & A^0_{13}(k) & A^0_{23}(k) & A^0_{33}(k)\\
-g\rho_1 & 0 & 0 & 0 & 0 & 0\\
0 & g(\rho_1-\rho_2) & 0 & 0 & 0 & 0\\
0 & 0 & g(\rho_2-\rho_3) & 0 & 0 &0 
\end{pmatrix}.
\end{equation}

\begin{rem}
According to \cite{CICMP} the unique solution to  \eqref{evol_syst} is given by $\hat{\mathfrak{m}}(k,t)=e^{\mathcal{M}(k)t}\hat{\mathfrak{m}}_0(k)$
(where $\mathfrak{m}_0(k)=\frac{1}{\sqrt{2\pi}}\int_{\mathbb{R}}\mathfrak{m}_0(x)e^{ikx}\,dx$ is the initial data) and corresponds (via the inverse Fourier transform) to the solution
\begin{equation}\label{repres_sol}
    \mathfrak{m}(x,t)=\frac{1}{2\pi}\int_{\mathbb{R}}e^{\mathcal{M}(k)t}\hat{\mathfrak{m}}_0(k)e^{ikx}\,dk,\quad t\geq 0
\end{equation}
of \eqref{Ham_syst} with initial data $\mathfrak{m}_0\in\mathcal{S}(\mathbb{R}).$ We note now that, if $\lambda(k)=-ikc\,(c\neq 0)$ is a purely imaginary eigenvalue of $\mathcal{M}(k)$ with eigenvector $\mathfrak{v}(k)\neq 0$, then $e^{-ikc t}$ is an eigenvalue of $e^{\mathcal{M}(k)t}$ with eigenvector $\mathfrak{v}(k)$. Therefore, we see from \eqref{repres_sol}, that the solution $\mathfrak{m}(x,t)$ of \eqref{Ham_syst} acquires the form 
\begin{equation}
\frac{1}{2\pi}\int_{\mathbb{R}} e^{ik(x-ct)}\hat{\mathfrak{m}}_0(k)\,dk.
\end{equation}
This shows that a purely imaginary eigenvalue $\lambda(k)=-ikc$ of $\mathcal{M}(k)$ corresponds to the fundamental oscillation mode $e^{ik(x-ct)}\mathfrak{m}_0(k)$, with frequency $\frac{|k|}{2\pi}$, and constant speed $c.$
\end{rem}
\noindent Searching now for solutions $\hat{\mathfrak{m}}=\hat{\mathfrak{m}}_0 e^{ik(x-ct)}$ in \eqref{evol_syst} yields
$$(\mathcal{M}(k)+ikcI_6)\hat{\mathfrak{m}}=0,$$
equation which implies ${\rm det} (\mathcal{M}(k)+ikcI_6)=0$, since it is meaningful to ask $\hat{\mathfrak{m}}\neq 0$. Since for $\lambda\in\mathbb{R}$ it holds that (and accounting for $A^0_{ij}:=A^0_{ji}$ when $i>j$)
\begin{equation}\label{char_pol}
\begin{split}
{\rm det}&(\mathcal{M}(k)-\lambda I_6)\\
=&\lambda^6-(b_1 A^0_{11}+b_2 A^0_{22}+b_3 A^0_{33})\lambda^4\\
&+[b_1 b_2(A^0_{11} A^0_{22}-(A^0_{12})^2)+b_2 b_3 (A^0_{22} A^0_{33}-(A^0_{23})^2)+b_1 b_3(A^0_{11} A^0_{33}-(A^0_{13})^2)]\lambda^2\\
&-b_1 b_2 b_3{\rm det}(A^0_{i,j})_{i,j=\overline{1,3}},
\end{split}
\end{equation}
we have that the wave speed $c$ satisfies the equation
\begin{equation}\label{eq4c}
\begin{split}
c^6&+\frac{b_1 A^0_{11}+b_2 A^0_{22}+b_3 A^0_{33}}{k^2}c^4\\
&+\frac{b_1 b_2(A^0_{11} A^0_{22}-(A^0_{12})^2)+b_2 b_3 (A^0_{22} A^0_{33}-(A^0_{23})^2)+b_1 b_3(A^0_{11} A^0_{33}-(A^0_{13})^2)}{k^4}c^2\\
&+\frac{b_1 b_2 b_3{\rm det}(A^0_{ij})_{i,j=\overline{1,3}} }{k^6}=0,
\end{split}
\end{equation}
where $b_1=-g\rho_1,\,b_2=g(\rho_1-\rho_2),\,b_3=g(\rho_2-\rho_3).$
To ease the further calculations we find useful to introduce the notation
\begin{equation}\label{Deltas}
\begin{split}
&\Delta_1(k):=  \rho_2 \cosh(k h_3) \sinh(k h_b)+\rho_3 \sinh(kh_3)\cosh(kh_b)  ,\\
&\Delta_2(k):= \rho_2 \sinh(kh_3)\sinh(kh_b)+ \rho_3 \cosh(k h_3) \cosh(k h_b),\\
& \Delta_3(k):=\rho_2 \cosh(k h_3)\cosh(k h_1) + \rho_1 \sinh(k h_1)\sinh(k h_3),\\
&\Delta_4(k):=\rho_1 \cosh(k h_3)\sinh(k h_1) + \rho_2 \sinh(k h_3)\cosh(k h_1),\\
&\Delta_5(k):=\rho_1 \cosh(k h_1)\sinh(k h_3) + \rho_2 \sinh(k h_1)\cosh(k h_3),
\end{split}
\end{equation}
 which entails \begin{equation}\label{Del1}
\Delta(k)= \rho_1 \sinh(k h_1) \Delta_1(k) + \rho_2\cosh(kh_1) \Delta_2(k).
\end{equation}
Setting also
\begin{equation}\label{Del2}
\tilde{\Delta}(k)= \rho_3 \cosh(k h_b)\Delta_3(k)+\rho_2 \sinh(k h_b) \Delta_4(k),
\end{equation}
 we have
\begin{equation}\label{Minors}
\begin{split}
M_{12}:= &A^0_{11} A^0_{22}-(A^0_{12})^2=\frac{k^2  \tilde{\Delta}(k) \Delta_1(k)\sinh(k h_1)}{\rho_1\Delta^2(k)},\\
M_{13}:=  &A^0_{11} A^0_{33}-(A^0_{13})^2=  \frac{k^2 \tilde{\Delta}(k) \Delta_5(k)  \sinh(k h_b)}{\rho_1 \Delta^2(k)}  ,\\
M_{23}:= & A^0_{22} A^0_{33}-(A^0_{23})^2 =\frac{k^2  \tilde{\Delta}(k) \cosh(k h_1) \sinh (k h_b)\sinh(k h_3) }{\Delta^2(k)} .
\end{split}
\end{equation}
The coefficients of equation \eqref{eq4c} can be now written explicitly as

\begin{equation}\label{2layersAijk}
\begin{split}
a_4:=& \frac{b_1 A^0_{11}+b_2 A^0_{22}+b_3 A^0_{33}}{k^2}\\
=& \frac{g\Big(  (\rho_2-\rho_1)  (\rho_3-\rho_2)\sinh(k h_1) \sinh (k h_b) \sinh (k h_3) - \rho_2 \rho_3 \sinh(k(h_1+h_b+h_3))\Big)}{k \Delta(k)} ,\\
a_2:= &\frac{b_1 b_2 M_{12}+b_2 b_3 M_{23}+b_1 b_3 M_{13}}{k^4} \\
= &\frac{g^2 \tilde{\Delta}(k)}{k^2 \Delta^2(k)}\Big((\rho_2-\rho_1)(\rho_3-\rho_2)\cosh(k h_1) \sinh(k h_b) \sinh(k h_3) \Big.  \\ 
& \Big. \phantom{**************} +(\rho_2-\rho_1) \Delta_1 \sinh(k h_1) + (\rho_3-\rho_2) \Delta_5 \sinh(k h_b) \Big),\\
a_0: =&\frac{b_1 b_2 b_3{\rm det}(A^0_{ij})_{i,j=\overline{1,3}} }{k^6}= -\frac{g^3(\rho_2-\rho_1)(\rho_3-\rho_2)\sinh(k h_1)\sinh(k h_b) \sinh(k h_3)}{k^3 \Delta(k)}.
\end{split}
\end{equation}

\subsection{Short wave limit}
The short wave limit for the solutions of \eqref{eq4c} can be obtained by using the asymptotics $\sinh(k h_i) \sim \pm (1/2)\exp(kh_i)$ and $ \cosh (k h_i) \sim (1/2)\exp(k h_i) $ for $k\to \pm \infty.$  Thus 
\begin{align}
a_4 & \sim - \frac{g}{|k|}\left(1+ \frac{\rho_2 - \rho_1}{\rho_2 + \rho_1}+ \frac{\rho_3-\rho_2}{\rho_3+\rho_2} \right) , \\
    a_2 & \sim \frac{g^2}{k^2} \left(\frac{\rho_2 - \rho_1}{\rho_2 + \rho_1}+ \frac{\rho_3-\rho_2}{\rho_3+\rho_2} + \frac{\rho_2 - \rho_1}{\rho_2 + \rho_1} \cdot \frac{\rho_3-\rho_2}{\rho_3+\rho_2}  \right), \\
    a_0 & \sim - \frac{g^3}{|k|^3} \left(\frac{\rho_2 - \rho_1}{\rho_2 + \rho_1}\right) \left(\frac{\rho_3-\rho_2}{\rho_3+\rho_2}\right) .
\end{align} Thus, the roots of \eqref{eq4c} are real and given by the expressions
\begin{equation} \label{cshw}
   c_1^2 \sim  \frac{g}{|k|}  , \quad  c_{2}^2 \sim  \frac{g}{|k|} \left(\frac{\rho_2 - \rho_1}{\rho_2 + \rho_1}\right) , \quad c_{3}^2 \sim  \frac{g}{|k|} \left(\frac{\rho_3 - \rho_2}{\rho_3 + \rho_2}\right).
\end{equation}

\section{The long wave limit}\label{long}
\noindent This section is devoted to an analysis of the dispersion relation giving the wave speed $c$ in the long wave limit $k\to 0$ for the case of three-layer flows. We would like to note that the problem of several layers goes back to Stokes, and the general formula for the dispersion relation in the form of a determinant is provided in \cite{Gr1886} where it is attributed to a Tripos exam question in Cambridge (1884), where Stokes has been Lucasian Professor. While, the dispersion relation is a convoluted algebraic equation (whose degree is twice the number of layers), it exhibits interesting properties:
it was shown, \cite{Ben, Bai, Yih} in the setting of steady flows of  fixed mass flux, 
that all solutions of this equation ({which, in fact, give the wave speeds}) are real numbers.
The central topics of this section are bounds for the roots of the dispersion relation as well as some approximate formulas for these roots together with a comparison between the solutions of the approximate formulas and the numerical solutions of the dispersion relation \eqref{LWLim}.

\noindent Utilizing the findings in Section \ref{disp} we infer that the equation for $c$ corresponding to the long-wave limit ($k\to 0$) is 
\begin{equation}\label{LWLim}
\begin{split}
&c^6  -g(h_1+h_3+h_b)c^4\\[1em]
&+g^2\left[(\rho_2-\rho_1)\frac{h_1 h_3}{\rho_2}+(\rho_3-\rho_2)\frac{h_3 h_b }{\rho_3}
+(\rho_3-\rho_1)\frac{h_1 h_b }{\rho_3}\right]c^2\\[1em]
&-g^3(\rho_2-\rho_1)(\rho_3-\rho_2)\frac{h_1 h_3 h_b }{\rho_2\rho_3}=0,
\end{split}
\end{equation}
where we recall that the (mean) thickness of the upper layer is $h_1,$ of the middle layer is $h_3$ and of the bottom layer is $h_b.$ We also use the notation $h:=h_3+h_b$ and $H=h_1+h=h_1+h_3+h_b$ (the total average depth of the fluid).

\subsection{Bounds for the roots of the dispersion relation}
To study equation \eqref{LWLim} in greater detail we denote $c^2:=X$, and so we are led to consider the polynomial equation
\begin{equation}\label{P}
\begin{split}
&P(X):=X^3-g H X^2\\[1em]
&+g^2\left[\rho_1(\rho_2-\rho_1)\frac{h_1 h_3}{\rho_1\rho_2}+\rho_2(\rho_3-\rho_2)\frac{h_3 h_b }{\rho_2\rho_3}
+\rho_1(\rho_3-\rho_1)\frac{h_1 h_b }{\rho_1\rho_3}\right]X\\[1em]
&-g^3\rho_1(\rho_2-\rho_1)(\rho_3-\rho_2)\frac{h_1 h_3 h_b }{\rho_1\rho_2\rho_3}=0.
\end{split}
\end{equation}
The following result gives the interval where the roots of $P$ lie.
\begin{prop}
All the (real) roots of $P$ lie in the interval $(0, gH)$.
\end{prop}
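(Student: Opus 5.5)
Write $P(X)=X^3-e_1X^2+e_2X-e_3$, with
\[
e_1=gH,\qquad
e_2=g^2\Big[\tfrac{(\rho_2-\rho_1)h_1h_3}{\rho_2}+\tfrac{(\rho_3-\rho_2)h_3h_b}{\rho_3}+\tfrac{(\rho_3-\rho_1)h_1h_b}{\rho_3}\Big],
\]
and
\[
e_3=g^3\,\tfrac{(\rho_2-\rho_1)(\rho_3-\rho_2)h_1h_3h_b}{\rho_2\rho_3}.
\]
The stable stratification $\rho_1<\rho_2<\rho_3$ gives $e_1,e_2,e_3>0$, and the argument uses only these sign properties together with two endpoint evaluations.

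\emph{Roots are positive.} For $X\le 0$ every term of $P(X)$ is non-positive: $X^3\le0$, $-e_1X^2\le0$, $e_2X\le 0$. Since $-e_3<0$, we get $P(X)<0$, so $P$ has no root in $(-\infty,0]$. Equivalently, by Descartes' rule of signs there are no negative roots, and $P(0)=-e_3\neq0$.

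\emph{Roots are below $gH$.} Let $r_1,r_2,r_3$ be the roots. The paper recalls from \cite{Ben,Bai,Yih} that they are real, and we have just shown they are positive. Vieta gives $r_1+r_2+r_3=e_1=gH$. If all three roots are positive, each is strictly less than the sum, so $r_i<gH$. Should one prefer not to invoke reality of the roots, it suffices to show $P(X)>0$ for $X\ge gH$. For such $X$ we have $X^3-gHX^2=X^2(X-gH)\ge0$, so it is enough to check $e_2X-e_3>0$ at $X=gH$, since $e_2X-e_3$ is increasing in $X$. That inequality reads
\[
H\Big[\tfrac{(\rho_2-\rho_1)h_1h_3}{\rho_2}+\dots\Big] > \tfrac{(\rho_2-\rho_1)(\rho_3-\rho_2)h_1h_3h_b}{\rho_2\rho_3}.
\]
It holds because the product $H\cdot\tfrac{(\rho_2-\rho_1)h_1h_3}{\rho_2}$ alone already contains the term $h_b\tfrac{(\rho_2-\rho_1)h_1h_3}{\rho_2}$, which dominates the right-hand side since $\tfrac{\rho_3-\rho_2}{\rho_3}<1$. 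So $P>0$ on $[gH,\infty)$, and any real root lies in $(0,gH)$.

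The only delicate point is this upper endpoint comparison. It reduces to the factor $(\rho_3-\rho_2)/\rho_3<1$ after expanding $H=h_1+h_3+h_b$, and the remaining terms are nonnegative.
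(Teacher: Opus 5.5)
Your argument is correct and is essentially the paper's proof: you show $P<0$ on $(-\infty,0]$ by inspecting signs, and $P>0$ on $[gH,\infty)$ via the split $P(X)=X^2(X-gH)+(e_2X-e_3)$. The only difference is bookkeeping. You dominate $e_3$ with the $h_b$-part of $H(\rho_2-\rho_1)h_1h_3/\rho_2$, whereas the paper groups the $(\rho_3-\rho_2)h_3h_b/\rho_3$ term with $e_3$ and uses $\rho_2H-(\rho_2-\rho_1)h_1\ge\rho_2h+\rho_1h_1$. Your Vieta shortcut is unnecessary, since it relies on the cited reality of the roots, but it is harmless.
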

\begin{proof}
We readily note from above that, the hypotheses $\rho_3>\rho_2>\rho_1$ and $h>h_3$, yield 
\begin{equation}\label{neg}
P(0)<0.
\end{equation}
Furthermore, it is also easy to see that if $H_m\geq H=h+h_1$ we have
\begin{equation}\label{pos}
\begin{split}
P(gH_m)=&g^3 H_m^2(H_m-H)+g^3 H_m\left[\rho_1(\rho_2-\rho_1)\frac{h_1 h_3}{\rho_1\rho_2}+\rho_1(\rho_3-\rho_1)\frac{h_1 h_b }{\rho_1\rho_3}\right]\\[0.5em]
&+\frac{g^3(\rho_3-\rho_2)h_3 h_b }{\rho_2\rho_3}\underbrace{(\rho_2 H_m- (\rho_2-\rho_1)h_1)}_{\geq\rho_2 h +\rho_1 h_1} >0.
\end{split}
\end{equation}
Inequalities \eqref{neg} and \eqref{pos} yield that there is $X_1\in (0, gH )$ such that $P(X_1)=0$. 
Close inspection of the coefficients of $P$ shows that $P(X)<0$ for all $X\leq 0$. This implies that all the real roots of $P$ lie in $(0, gH )$. 
\end{proof}
\noindent Although explicit formulas are available for the roots of $P$, the intricacy of these formulas makes them difficult to use. Instead, using a general result of Prodanov (\cite{Prod, Prod2}),
we provide bounds for the roots, defined in terms of the coefficients of $P$. First, we establish some notation.
Let 
\begin{align}
    \mathcal{A} & :=-gH  \label{defA} \\
      \mathcal{B} &:=g^2\left[\frac{\rho_2-\rho_1}{\rho_2}h_1 h_3+\frac{\rho_3-\rho_2}{\rho_3}h_3 h_b
+\frac{\rho_3-\rho_1}{\rho_3}h_1 h_b \right] \label{defB} \\
\mathcal{C}& :=-g^3(\rho_2-\rho_1)(\rho_3-\rho_2)\frac{h_1 h_3 h_b }{\rho_2\rho_3}, \label{defC}
\end{align}
be the coefficients of the polynomial from \eqref{P}. 

The location of the three roots of the polynomial $P(X) $ depends, cf.\cite{Prod, Prod2}, on the ratio $\mathcal{B}/\mathcal{A}^2,$ which will be analyzed in the sequel. For a fixed value of $\mathcal{A},$ (that is, for a fixed total depth $H$) the maximum of $\mathcal{B}$ can be found, for example, from the Lagrangian function
$$L(h_1,h_3,h_b,\lambda)=\mathcal{B}(h_1,h_3,h_b) - \lambda(h_1+h_3+h_b - H).$$
The maximal value of $\mathcal{B}$ is 
\begin{equation}
    \mathcal{B}_{max}= \frac{\rho_2 (\rho_3 -\rho_1) g^2 H^2}{\rho_2^2 + (3\rho_3-\rho_1)\rho_2 +\rho_3\rho_1}=
    \frac{\nu_2 (1-\nu_1 ) }{\nu_2^2 + (3-\nu_1)\nu_2 +\nu_1} \mathcal{A}^2,
\end{equation}
(where  $\nu_1=\frac{\rho_1}{\rho_3}<1,$ $\nu_2=\frac{\rho_2}{\rho_3}<1$) and is attained for
\begin{align}
   h_1^0&=  \frac{(\rho_2 + \rho_3)\rho_2}{\rho_2^2 + (3\rho_3-\rho_1)\rho_2 +\rho_3\rho_1 }H,\\
   h_3^0&= \frac{(\rho_3 - \rho_1)\rho_2}   {\rho_2^2 + (3\rho_3-\rho_1)\rho_2 +\rho_3\rho_1 }H,\\
   h_b^0&= \frac{(\rho_1 + \rho_2)  \rho_3}   {\rho_2^2 + (3\rho_3-\rho_1)\rho_2 +\rho_3\rho_1 }H.
\end{align}
It is easy to check that \begin{equation}\label{fract}
     \frac{\nu_2 (1-\nu_1 ) }{\nu_2^2 + (3-\nu_1)\nu_2 +\nu_1}<\frac{1}{3},\end{equation} thus we always have $ 0<\mathcal{B}< \frac{1}{3} \mathcal{A} ^2.$ In \cite{Prod, Prod2} three possible scenarios for the ratio
     $\frac{\mathcal{B}}{\mathcal{A} ^2}$ can occur and these are
    \begin{align}
0<\mathcal{B}< \frac{2}{9} \mathcal{A} ^2,  \label{case1}\\
\frac{2}{9} \mathcal{A} ^2  <\mathcal{B}< \frac{1}{4} \mathcal{A} ^2 , \label{case2} \\
\frac{1}{4} \mathcal{A} ^2 <\mathcal{B}< \frac{1}{3} \mathcal{A} ^2 . \label{case3}
    \end{align}
    In each of the above scenarios bounds detailing the location of the three positive roots of $P(X)$ are available, cf. \eqref{root_int} below. Before passing to a discussion on these bounds, we examine
    the term on the left-hand side of \eqref{fract} and so we identify some conditions for the fulfillment of \eqref{case1}, \eqref{case2}, \eqref{case3}. These conditions are given in terms of the ratio between the density of the top layer and suitable linear combinations of the densities of the other layers as follows
    \begin{align}
(i):\,\,0<\mathcal{B}< \frac{2}{9} \mathcal{A} ^2, \, \, \text{when} \, \, \nu_1& > \frac{\nu_2(3-2\nu_2)}{7\nu_2+2}, \, \, \text{i.e.} \, \,  \rho_1 > \rho_U=\frac{3\rho_3-2\rho_2}{7\rho_2+2\rho_3} \rho_2 ; \label{est1}\\
(ii):\,\,0  <\mathcal{B}< \frac{1}{4} \mathcal{A} ^2 
\, \, \text{when} \, \,  \nu_1 &> \frac{\nu_2(1-\nu_2)}{3\nu_2+1} \,\, \text{i.e.} \,\, \rho_1 >\rho_L= \frac{\rho_3-\rho_2}{3\rho_2+\rho_3} \rho_2  .
\label{est2} 
    \end{align} Of course, we always have $\rho_U>\rho_L.$ The above conditions in terms of the densities are only sufficient conditions. Unfortunately, there is no simple criterion (in terms of densities), which discriminates between \eqref{case1}, \eqref{case2}, and \eqref{case3}. For example, a necessary (but not sufficient) condition to have the situation in \eqref{case3} is $$\rho_1 < \rho_L= \frac{\rho_3-\rho_2}{3\rho_2+\rho_3} \rho_2 .$$
In most practical-relevant situations, the relative differences in densities are relatively small. It can be shown that this assumption leads to the situation in \eqref{case1}. More precisely, assuming that 
\begin{equation}\label{rel_dens_case1}
   \frac{\rho_2-\rho_1}{\rho_2}<\frac{4}{9},\quad \frac{\rho_3-\rho_2}{\rho_3}<\frac{4}{9},\quad  \frac{\rho_3-\rho_1}{\rho_3}<\frac{4}{9},
\end{equation}
and using the Cauchy-Schwartz inequality, we infer from \eqref{defB} that
\begin{equation}\label{B_upper}
    \mathcal{B}<\frac{4}{9} g^2(h_1 h+h_3(h-h_3))<\frac{4}{9} g^2(h_1 h+\frac{h^2}{4})<\frac{4}{9} g^2\frac{(h+h_1)^2}{2}=\frac{2}{9}\mathcal{A}^2.
\end{equation}
Therefore, condition \eqref{rel_dens_case1} is another sufficient condition that leads to the case \eqref{case1}.
In practical applications the internal waves occur between layers of one type of liquid (or very similar liquids) with slightly different properties (and densities) thus the condition \eqref{rel_dens_case1} covers a wide range of practically possible situations.  

According to \cite{Prod, Prod2}, when $ \mathcal{A}<0,$ for the three real roots of $P(X)$, denoted by $0<X_3\leq X_2\leq X_1$, in the case \eqref{case1} the following bounds are valid:
\begin{equation}\label{root_int}
    \begin{split}
    0<-\mathcal{C}\leq -\mathcal{C}_2, \\
        -\frac{\mathcal{C}}{\mathcal{B}}\leq X_3\leq\mu_2,\\
        \mu_2\leq X_2\leq\lambda_2,\\
        \lambda_1\leq X_1\leq \sigma_2
    \end{split}
\end{equation}
where the following notations are used:    
    \begin{align}
         \mathcal{C}_1& :=-\frac{2\mathcal{A}^3}{27}+\frac{\mathcal{A}\mathcal{B}}{3}+\frac{2}{27}\sqrt{(\mathcal{A}^2-3\mathcal{B})^3}, \quad  \mathcal{C}_2 :=-\frac{2\mathcal{A}^3}{27}+\frac{\mathcal{A}\mathcal{B}}{3}-\frac{2}{27}\sqrt{(\mathcal{A}^2-3\mathcal{B})^3}, \nonumber \\
\lambda_{1}&:=-\frac{\mathcal{A}}{2}+\sqrt{\frac{\mathcal{A}^2}{4}-\mathcal{B}},\quad \lambda_{2}:=-\frac{\mathcal{A}}{2} -\sqrt{\frac{\mathcal{A}^2}{4}-\mathcal{B}}, \nonumber \\
    \mu_1&:=-\frac{\mathcal{A}}{3}+\frac{\sqrt{3}}{3}\sqrt{\frac{\mathcal{A}^2}{3}-\mathcal{B}}, \quad \mu_2:=-\frac{\mathcal{A}}{3}-\frac{\sqrt{3}}{3}\sqrt{\frac{\mathcal{A}^2}{3}-\mathcal{B}},  \nonumber
    \end{align}
and
\begin{equation}
\sigma_1:=-\frac{\mathcal{A}}{3}-2\frac{\sqrt{3}}{3}\sqrt{\frac{\mathcal{A}^2}{3}-\mathcal{B}}, \quad
\sigma_2:=-\frac{\mathcal{A}}{3}+2\frac{\sqrt{3}}{3}\sqrt{\frac{\mathcal{A}^2}{3}-\mathcal{B}}. \nonumber
\end{equation}

\subsection{Approximate solutions for the wave speeds}
We provide here some further insights resulting from formula \eqref{LWLim}. More precisely, we take into account several propagation regimes defined by the mutual relations existing between monomials in \eqref{LWLim}. We distinguish the following cases:\\
(i) $a_2 c^2$ and $c_0$ are much smaller than $c^6$ and $a_4 c^4$.
Then the approximate equation we obtain has the root $\pm \sqrt{g H}.$ However, from estimates \eqref{root_int} for the largest root we observe that if $\mathcal{B}/\mathcal{A}^2\ll 1$ 
\begin{align}
    \sigma_2 &= |\mathcal{A}|\left[1- \frac{\mathcal{B}}{\mathcal{A}^2}+ \mathcal{O}\left( \left( \frac{\mathcal{B}}{\mathcal{A}^2}\right)^2\right) \right], \\
    \lambda_1 &= |\mathcal{A}|\left[1- \frac{\mathcal{B}}{\mathcal{A}^2}+  \mathcal{O}\left( \left( \frac{\mathcal{B}}{\mathcal{A}^2}\right)^2\right) \right] ,
\end{align}
and the next correction to the approximate expression $X_1\approx gH$ is
$$X_1\approx|\mathcal{A}|- \frac{\mathcal{B}}{|\mathcal{A}|}= gH-\frac{g}{H} \left[\frac{\rho_2-\rho_1}{\rho_2}h_1 h_3+\frac{\rho_3-\rho_2}{\rho_3}h_3 h_b
+\frac{\rho_3-\rho_1}{\rho_3}h_1 h_b \right]. $$ Thus, the propagation speed on the surface is approximately
\begin{equation} \label{a1}
    c_{1,a}=\pm \sqrt{gH-\frac{g}{H} \left[\frac{\rho_2-\rho_1}{\rho_2}h_1 h_3+\frac{\rho_3-\rho_2}{\rho_3}h_3 h_b
+\frac{\rho_3-\rho_1}{\rho_3}h_1 h_b \right]}.
\end{equation}

\begin{table}[]
\small
\begin{tabular}{|l||l|l|l||l|l|l||l|l|l||l|}
\hline
       No & $\rho_1$          &       $\rho_2 $           &    $\rho_3 $              &    $  h_1   $          &   $h_3 $         &      $h_b$             &     $ \frac{c_{1,a}}{\sqrt{gH}}$             &      $\frac{c_{2,a}}{\sqrt{gH}} $            &       $\frac{c_{3,a}}{\sqrt{gH}} $            &    $ \frac{c_i}{\sqrt{gH}}$                \\ 
        \hline
 \multirow{9}{*}{} & \multirow{9}{*}{} & \multirow{9}{*}{} & \multirow{9}{*}{} & \multirow{9}{*}{} & \multirow{9}{*}{} & \multirow{9}{*}{} & \multirow{9}{*}{} & \multirow{9}{*}{} & \multirow{9}{*}{} & \multirow{9}{*}{} \\
  1.  &    1000  &       1010     &  1020       &    200       &      100         &      300        &       0.9976    &     0.0680      &       0.0241            &           0.9976, \, 0.0630, \, 0.0261         \\
  2.  &       1000       &         1020      &     1100        &   10         &    100         &       3000     &            0.9987   &        0.0504      &       0.0075           &   0.9987, \, 0.0499, \, 0.0076                \\
  3. &    1005  &       1016     &  1027       &   600       &      100         &      3000        &       0.9984   &        0.0557    &        0.0115          &              0.9984, \, 0.0545 , \, 0.0118       \\
4. &    997  &       999     &  1000       &   6       &      2         &      20        &      0.9997   &       0.0233      &       0.00636           &              0.9997 , \, 0.0223,  \, 0.00664       \\
5. &    997  &       999     &  1000       &   10       &      20         &      50        &      0.9998   &       0.0212      &       0.00929         &              0.9998 ,  \, 0.0184  ,  \, 0.01078        \\
6. &    650  &       1500     &  5500       &   0.02       &      0.05         &     0.03        &      0.8839   &       0.4676      &      0.2378           &       0.8487 , \, 0.4346 , \, 0.3015        \\
7. &     620  &       1900     &  5900       &   0.02       &      0.05         &     0.03        &      0.8816   &       0.4720      &      0.2480           &       0.8458 , \, 0.4188 , \, 0.3304        \\
   
  \hline
\end{tabular}
\caption{ The approximate and the exact wave-speeds for some values of the parameters.
{The entries 1 and 2 correspond to internal waves, those in 3 refer to equatorial internal waves, 4 and 5 pertain to three-layer lakes, while the entries 6 and 7 refer to laboratory experiments.}
The dimensionalities of all quantities are given in the International System of Units (SI), $g=9.81$ m/s$^2$. All velocities are given in non-dimensional units by division of $\sqrt{gH}.$  The quantities $c_i$ are the positive roots of \eqref{LWLim}}. 
\label{demo-table}
\end{table}

(ii) The dominant terms are $a_4 c^4$ and $a_2 c^2.$ 
Indeed, from \eqref{root_int} for $X_2$ if $\mathcal{B}/\mathcal{A}^2\ll 1$ we have
\begin{align}
    \mu_2 &=  \frac{\mathcal{B}}{2|\mathcal{A}|}+ |\mathcal{A}|\mathcal{O}\left( \left( \frac{\mathcal{B}}{\mathcal{A}^2}\right)^2\right), \\
    \lambda_2 &= \frac{\mathcal{B}}{|\mathcal{A}|}+ |\mathcal{A}|\mathcal{O}\left( \left( \frac{\mathcal{B}}{\mathcal{A}^2}\right)^2\right),
\end{align}
and therefore $X_2$ is of order $\mathcal{B}/|\mathcal{A}|\ll X_1. $ Moreover, again from \eqref{root_int} $(|\mathcal{C}|/\mathcal{B})<\mu_2\approx  {\mathcal{B}}/{2|\mathcal{A}|},  $ thus $\mathcal{B}^2>2|\mathcal{A}\mathcal{C}|.$ This justifies the further assumption $\mathcal{B}^2 \gg |\mathcal{A}\mathcal{C}|,$ under which it can be seen that, 
when  $c^2=X_2$, the dominant terms in the equation are $a_4 c^4$ and $a_2 c^2.$ Then the solution of the approximate equation gives $ c_{2,a}^2=\mathcal{B}/|\mathcal{A}|$ or
\begin{equation}\label{a2}
    c_{2,a}=\pm \sqrt{\frac{g}{H} \left[\frac{\rho_2-\rho_1}{\rho_2}h_1 h_3+\frac{\rho_3-\rho_2}{\rho_3}h_3 h_b
+\frac{\rho_3-\rho_1}{\rho_3}h_1 h_b \right]   }.
\end{equation}
(iii) The dominant terms are $a_2 c^2$ and $a_0.$ From Vi\'ete formulas we have $$X_1X_2X_3=-\mathcal{C}=|\mathcal{C}|.$$  In the leading order $X_1=|A|$, $X_2=\mathcal{B}/|\mathcal{A}|$ therefore $X_3= |\mathcal{C}/\mathcal{B}|$ thus 
\begin{equation}\label{a3}
    c_{3,a}=\pm \sqrt{\frac{g(\rho_3 - \rho_2)(\rho_2-\rho_1)h_1 h_b h_3} {h_1h_b\rho_2(\rho_3-\rho_1) + h_b h_3\rho_2(\rho_3-\rho_2)+h_1h_3\rho_3(\rho_2-\rho_1)}}.  
\end{equation}
In these derivations we made the assumptions $\mathcal{B}/\mathcal{A}^2\ll 1$ and $\mathcal{B}^2 \gg |\mathcal{A}\mathcal{C}|,$ which are compatible with the inequalities of \eqref{case1}, and, of course with all kind of realistic physical data. Therefore, it is expected that the approximate formulas perform best in this case. Nevertheless, the results in Table \ref{demo-table} show that \eqref{a1}, \eqref{a2} and \eqref{a3} perform quite well also in \eqref{case2} and \eqref{case3}.

\begin{rem}
 The two-layer limit case $\rho_3 =\rho_2$, with depths $h_1$ and $h$ in \eqref{a1} leads to
\begin{equation} 
    c_{1,a}^*=\pm \sqrt{g(h_1+h)\left(1-\frac{ h_1 h}{(h_1+h)^2} \frac{\rho_2-\rho_1}{\rho_2}
 \right)},
\end{equation}
\end{rem} which appears also in \cite{CGK,Kodaira} etc.

\begin{rem}
    The two-layer limit case $\rho_3 =\rho_2$, with depths $h_1$ and $h$ in \eqref{a2} leads to the well known formula for the phase speed of internal wave, which is related to the so-called Brunt-V\"ais\"al\"a frequency,
    \begin{equation}\label{a22}
    c^*_{2,a}=\pm \sqrt{\frac{g  (\rho_2 - \rho_1) h h_1 }{\rho_2(h+h_1)}   }.
\end{equation}
As a matter of fact, the precise formula for two layers in the rigid lid approximation is formula (4.5) of \cite{CGK} but because in the ocean the two densities are very close, $\rho_1 \approx \rho_2,$ in many situations formula \eqref{a22} is used instead \cite{Mas,Kodaira}. 
\end{rem}

The propagation speed \eqref{a2} is associated with the disturbance described by $\eta_2,$ because the internal wave formed at this interface usually has a significant amplitude moving with the mentioned speed. Clearly, $\sqrt{gH}$ is the speed of the surface waves (over the total depth $H$) which due to the influence of the interfaces is modified to \eqref{a1}, while \eqref{a3} is the speed of internal waves, observable at the interface $y=\eta_3.$  In general, waves with all possible speeds have their traces on the surface and on both interfaces. However, their amplitudes can be of quite different magnitudes.  

Table \ref{demo-table} displays a comparison between the solutions of the approximate formulas above and the (positive) numerical solutions of \eqref{LWLim}. It is remarkable that the approximate formulae \eqref{a1} -- \eqref{a3} give a very good approximation (for a range of physically feasible values of the parameters) to the actual solutions of \eqref{LWLim}.  {The lines 1--6 in Table \ref{demo-table} correspond to \eqref{case1} and the approximation formulas provide very good approximations. 
The line 7 illustrates \eqref{case2}. It does not seem that there are any real-life examples satisfying case \eqref{case3}.
The cases \eqref{case2} and \eqref{case3} exhibit very large differences in the densities and are not realistic in an oceanographic context.} The approximation formulas work less well in the examples that exhibit large density contrasts, but nevertheless the approximate values are very good as a guide to the location of the exact ones. 

{The first example in Table 1 reflects typical parameter values for interfacial ocean internal waves. }
{The next example (shown as no. 2 in Table \ref{demo-table}) concerns oceanographic and limnological systems, where stratified water columns are
common due to variations in temperature, salinity, and pressure. A three-layer structure with densities of approximately 1000 $kg/m^3$, 1020 $kg/m^3$, and 1100 $kg/m^3$, and respective thicknesses of 10 m, 100 m, and 3000 m, can be justified by combining observations from coastal, open ocean, and hypersaline environments, cf. \cite{s_lay}, \cite{m_lay},  \cite{b_lay}.}

The third example (line 3 in Table \ref{demo-table})  is with parameter values, typical for the internal equatorial waves in the Pacific Ocean; see, for example, \cite{CICMP}. The total depth of the Pacific Ocean is approximately 4000 m on average, and the upper layer is several hundred meters deep. We note that standard seawater density rarely exceeds 1050 $kg/m^3$. 

{The next two examples (lines 4 and 5 in Table \ref{demo-table}) of a three-layer system concern temperate lakes, in which
particularly during the summer stratification period, the water column consistently organizes into three distinct layers: the epilimnion, metalimnion, and hypolimnion. The epilimnion, which typically extends from the surface to depths of 5 to 20 meters, is a warm, well-mixed layer influenced by wind and solar heating. Beneath it lies the metalimnion, or thermocline, a transition zone several meters thick, where temperature and density change rapidly with depth, effectively inhibiting vertical mixing. The hypolimnion, the coldest and densest layer, lies below the thermocline and remains largely isolated from surface influences, often becoming oxygen-depleted over time because of biological activity and lack of mixing. This three-layer structure is not just a convenient simplification, but a physically necessary framework that accurately captures the thermal, chemical, and biological dynamics of stratified lakes. Unlike oceanographic systems, where stratification is governed by both temperature and salinity over much greater depths, lake stratification is shallow, thermally driven, and seasonally reversible, reflecting the unique density behavior of freshwater and the limited depth of most inland water bodies, cf. \cite{Hut}.
}

{To investigate flow behavior in sharply stratified systems, we consider (in lines 6 and 7 of Table \ref{demo-table}) a three-layer fluid column with engineered density contrasts that far exceed those found in natural oceanographic environments \footnote{National Institute of Standards and Technology, US Department of Commerce, NIST Chemistry WebBook, SRD 69}. Unlike ocean stratification, which typically involves density differences of only 1–3 $\%$ due to temperature and salinity gradients, this system spans nearly an order of magnitude in density, using immiscible or semi-miscible fluids to maintain stable, well-defined interfaces.}

{The top layer typically consists of light hydrocarbons such as pentane or hexane. These fluids have densities in the range of 600–700 $kg/m^3$ and form a buoyant, sharply defined upper boundary. In experimental setups, this layer is often kept relatively thin, with a height of about 1–2 cm, to minimize mixing and maintain interface clarity.}

{The middle layer is composed of dense aqueous solutions, such as $ZnCl_2$ brines or glycerol-water mixtures, with densities ranging from 1,000 to 2,200 $kg/m^3$. This layer serves as a transitional buffer and is typically the thickest, with heights of 3–5 cm, allowing for stable stratification and tunable flow dynamics. Its density can be adjusted to fine-tune the separation between the top and bottom layers.}

{The bottom layer features ultra-dense fluids like perfluorodecalin or liquid metals such as gallium alloys, which can reach densities of 6,400 $kg/m^3$. This layer is usually 1–3 cm thick and provides a heavy, immiscible foundation that anchors the system and prevents upward mixing.
}

\section{The limit to two-layers with a free surface}  
\label{2layers}

We show here that the dispersion relation for the two-layer case, known from previous works \cite{CGK, Lamb}, follows from the considerations about the three-layer setting, as a particular case of equation \eqref{eq4c}.
In this case we set $\rho_3=\rho_2$. Then, taking into account \eqref{Aijk} and \eqref{den}, we have 
\begin{equation}\label{2layersAijk0}
\begin{split}
&A^0_{11} = k\frac{\tanh(kh) \coth(kh_1)+\frac{\rho_2}{\rho_1}}{\rho_2 \coth(k h_1) + \rho_1 \tanh(kh)},\\
&A^0_{12} =  k\frac{\tanh(kh) }{\sinh(kh_1)\big(\rho_2 \coth(k h_1) + \rho_1 \tanh(kh)\big)}  ,\\
&A^0_{22} = k\frac{\tanh(kh) \coth(kh_1) }{\rho_2 \coth(k h_1) + \rho_1 \tanh(kh)}.\\
\end{split}
\end{equation}
Therefore, expanding the determinant of the matrix $\mathcal{M}(k)-\lambda I_6$ by row three, we obtain
\begin{equation}\label{eq_lambda}
{\rm det}(\mathcal{M}(k)-\lambda I_6)=\lambda^2\Big[\lambda^4-(b_1 A^0_{11}+b_2 A^0_{22})\lambda^2+b_1 b_2\big(A^0_{11} A^0_{22}-(A^0_{12})^2\big)\Big].
\end{equation}
Further, we find from \eqref{2layersAijk0} that
\begin{subequations}\label{coeff}
\begin{align}
b_1 A^0_{11}+b_2 A^0_{22}=-g\rho_2 k\frac{1+\tanh(kh)\coth(kh_1)}{\rho_2\coth(kh_1)+\rho_1\tanh(kh)},\\[1em]
b_1 b_2\big(A^0_{11} A^0_{22}-(A^0_{12})^2)=\frac{g^2(\rho_2-\rho_1)k^2\tanh(kh_1)\tanh(kh)}{(\rho_2+\rho_1\tanh(kh_1)\tanh(kh))}.
\end{align}
\end{subequations}
Proceeding as before, that is searching for solutions $\hat{u}=\hat{u}_0 e^{ik(x-ct)}$ in \eqref{evol_syst} and denoting $\omega(k):=ck$, which is the spatial frequency of the wave solution, we notice from \eqref{eq_lambda} and \eqref{coeff} that $\omega$ satisfies the equation
\begin{equation}\label{eq_omega}
\omega^4-g\rho_2 k\frac{1+\tanh(kh)\coth(kh_1)}{\rho_2\coth(kh_1)+\rho_1\tanh(kh)}\omega^2+
\frac{g^2(\rho_2-\rho_1)k^2\tanh(kh_1)\tanh(kh)}{(\rho_2+\rho_1\tanh(kh_1)\tanh(kh))}=0,
\end{equation}
that is, we recover the equation obtained by Craig \emph{et al.}, cf. \cite{CGK} with Hamiltonian methods. This equation apparently goes back to \cite{Lamb} (1932) where the problem of "superimposed liquids" is solved.
\begin{rem}
Denoting $\omega^2:=Y$, we observe that $Y$ satisfies a second degree algebraic equation whose discriminant is 
$\Delta=(b_1 A_{11}-b_2 A_{22})^2+4b_1 b_2(A_{12})^2>0$, which shows that this equation has two real roots, called $Y_1,Y_2$.
Since
\begin{subequations}
\begin{align}
Y_1+Y_2=g\rho_2 k\frac{1+\tanh(kh)\coth(kh_1)}{\rho_2\coth(kh_1)+\rho_1\tanh(kh)}>0,\\[1em]
Y_1\cdot Y_2=\frac{g^2(\rho_2-\rho_1)k^2\tanh(kh_1)\tanh(kh)}{(\rho_2+\rho_1\tanh(kh_1)\tanh(kh))}>0,
\end{align}
\end{subequations}
we infer that $Y_1$ and $Y_2$ are positive, that is, equation \eqref{eq_omega} has four real roots.
\end{rem}

\section{The rigid lid model}\label{rigid}

In this section we discuss the rigid lid model, which corresponds to a flat surface $\eta_1(x,t)\equiv 0.$
The difference from the model with a nontrivial free surface is that for the domain $\Omega_1$ of the upper layer we have a single DN operator, namely $G_1(\eta_2).$ 

\subsection{DN operators and evolution equations}
We use the techniques developed in the previous sections and repeat the same steps. The Hamiltonian formulation can be established in this case as well, however we are going to employ only techniques related to the DN operators.
For the middle and the lower layers we have the DN operators \eqref{dn_m} and \eqref{dn_b}. The DN operator for the upper layer is like in (2.12) of \cite{CGK} and is defined as
\begin{equation}
    \left( \frac{\partial \varphi_1}{\partial (-\bf{n}_2)} \right)_{s_2}\sqrt{1+\eta_{2,x}^2}=-{\bf n_2}\cdot (\nabla \varphi_1)_{s_2}  \equiv - G_1(\eta_2) (\varphi_1)_{s_2}.
\end{equation}
More details are provided in formula (A.10) of \cite{CGK}, for example
\begin{equation}
    G_1(\eta_2)=\sum_{l=0}^{\infty}G_1^{(l)}(\eta_2),
\end{equation}

\begin{equation}\label{Eq:G1}
    G_1^{(0)}= D \tanh (h_1 D), \quad G_1^{(1)}=- D \eta_2(x) D + G_1^{(0)} \eta_2(x) G_1^{(0)} \ldots.
\end{equation}
Then, obviously
\begin{equation}\label{Eq:eta2}
    \eta_{2,t}=- G_1(\eta_2) (\varphi_1)_{s_2}=\Gamma_{11}(\varphi_2)_{s_2}+\Gamma_{12}(\varphi_2)_{s_3},
\end{equation} and, like in \eqref{GammaGij}, we obtain
\begin{equation}\label{Eq:eta3}
    \eta_{3,t}=G(\eta_3)(\varphi_3)_{s_3}=-[\Gamma_{21}(\varphi_2)_{s_2}+\Gamma_{22}(\varphi_2)_{s_3}].
\end{equation}
The equations \eqref{Eq:eta2} and \eqref{Eq:eta3} together with the 
\begin{equation} \label{eq:xi12}
    \xi_2:=\rho_2(\varphi_2)_{s_2}-\rho_1(\varphi_1)_{s_2}, \quad \xi_3:=\rho_3(\varphi_3)_{s_3}-\rho_2(\varphi_2)_{s_3}
\end{equation}
can be solved for $(\varphi_1)_{s_2},$ $(\varphi_2)_{s_2},$ $(\varphi_2)_{s_3}$ and $(\varphi_3)_{s_3}$ in terms of $\xi_2,$ $\xi_3$ and the corresponding DN operators (that depend on $\eta_2$ and $\eta_3$). We obtain the following expressions
\footnote{These expressions are different from those in \eqref{solQEq}, \eqref{phi13surf}.}:
\begin{align}
    (\varphi_2)_{s_2}&=B^{-1}\left[  \rho_1 G \xi_3 + (\rho_3 \Gamma_{22}+\rho_2 G) \Gamma_{12}^{-1} G_1 \xi_2 \right]=:\Phi^2_2(\eta_i, \xi_i) ,  \label{FFFa}\\
    (\varphi_2)_{s_3}&=-(B^*)^{-1}\left[  \rho_3 G_1 \xi_2 + (\rho_1 \Gamma_{11}+\rho_2 G_1) \Gamma_{21}^{-1} G \xi_3 \right]=:\Phi^2_3(\eta_i, \xi_i), \label{FFFb}
\end{align}
where
\begin{align}
    B= (\rho_3 \Gamma_{22}+\rho_2 G) \Gamma_{12}^{-1}(\rho_2 G_1+\rho_1 \Gamma_{11})-\rho_1\rho_3 \Gamma_{21}. \\
    \end{align}
This allows to express \eqref{Eq:eta2}, \eqref{Eq:eta3} in the form
\begin{align}\label{Eq:eta23}
    \eta_{2,t}&= A_{22}\xi_2+A_{23} \xi_3 \\
     \eta_{3,t}&= A_{32}\xi_2+A_{33} \xi_3 ,
\end{align}
where
\begin{align}
 A_{22}&=-\rho_3 \Gamma_{12} (B^*)^{-1} G_1+ \Gamma_{11} B^{-1}(\rho_3 \Gamma_{22}+\rho_2 G) \Gamma_{12}^{-1} G_1,  \label{A22}\\
    A_{23}&=\rho_1 \Gamma_{11} B^{-1} G -\Gamma_{12} (B^*)^{-1}(\rho_1 \Gamma_{11}+\rho_2 G_1) \Gamma_{21}^{-1} G, \\
    A_{32}&= \rho_3 \Gamma_{22} (B^*)^{-1} G_1 -\Gamma_{21} B^{-1}(\rho_3 \Gamma_{22}+\rho_2 G) \Gamma_{12}^{-1} G_1, \\
     A_{33}&=-\rho_1 \Gamma_{21} B^{-1} G +\Gamma_{22} (B^*)^{-1}(\rho_1 \Gamma_{11}+\rho_2 G_1) \Gamma_{21}^{-1} G  \label{A33}
\end{align} are operators that depend only on $\eta_2$ and $\eta_3.$ These operators, of course, are not the same as those in \eqref{Aij}. Thus, defining 
\begin{equation}\label{F23}
 F_2(\eta_i, \xi_i):=   A_{22}\xi_2+A_{23} \xi_3, \quad F_3(\eta_i, \xi_i):=A_{32}\xi_2+A_{33} \xi_3 , \, i=2,3,
\end{equation}
we have the nonlinear dynamics for $\eta_2$ and $\eta_3$ from equations \eqref{Eq:eta2} in the form
\begin{equation}\label{eta12tRL}
    \eta_{2,t}= F_2(\eta_i, \xi_i), \quad \eta_{3,t}=F_3(\eta_i, \xi_i),
\end{equation}
and also
\begin{align}\label{Eq:Phi23a}
     (\varphi_1)_{s_2}&=-G_1^{-1}\eta_{2,t}=-G_1(\eta_2)^{-1}F_2(\eta_i, \xi_i)=:\Phi^1_2(\eta_i, \xi_i), \\
      (\varphi_3)_{s_3}&=G^{-1}\eta_{3,t}=G(\eta_3)^{-1}F_3(\eta_i, \xi_i)=:\Phi^3(\eta_i, \xi_i). \label{Eq:Phi23b}
\end{align}
    The time-evolution of $\xi_2$ and $\xi_3$ is given by \eqref{E22}, \eqref{E33},
\begin{align}
        \xi_{2,t}&+\frac{\rho_2 (\Phi^2_{2,x})^2-\rho_1 (\Phi_{2,x}^1)^2-2F_2\xi_{2,x}\eta_{2,x}+(\rho_1-\rho_2)F_2^2   }{2(1+\eta_{2,x}^2)}+(\rho_2-\rho_1)g \eta_2 =0,  \label{Bern1}\\
        \xi_{3,t}&+\frac{\rho_3 (\Phi^3_{x})^2-\rho_2 (\Phi_{3,x}^2)^2-2F_3\xi_{3,x}\eta_{3,x}+(\rho_2-\rho_3)F_3^2   }{2(1+\eta_{3,x}^2)}+(\rho_3-\rho_2)g \eta_3 =0 \label{Bern2}
\end{align}
with the new definitions of $F_2,F_3$ in \eqref{F23} and the quantities in \eqref{FFFa}, \eqref{FFFb}, \eqref{Eq:Phi23a}, and \eqref{Eq:Phi23b}.

Thus we have the full nonlinear system of the time evolution in terms of DN operators given by \eqref{eta12tRL}, \eqref{Bern1}, \eqref{Bern2}. We will illustrate how this system produces model equations in different propagation regimes.

\subsection{Leading order linear approximation and dispersion relation}

In the leading order approximation we are using the DN operators from \eqref{Eq:G1}, \eqref{Eq:G}, and \eqref{Eq:Gamma} to obtain 
\begin{align}
 A_{22}^0(k)&= \frac{k[\rho_2 \cosh(kh_3)\sinh(kh_b) + \rho_3 \sinh(kh_3)\cosh(kh_b)]\sinh(k h_1)}{ {\bf D}(k) },\\
    A_{23}^0(k)&=A_{32}^0(k)=  \frac{k \rho_2 \sinh(k h_b)\sinh(k h_1) }{ {\bf D}(k) }, \\
     A_{33}^0(k)&= \frac{k[\rho_1 \cosh(k h_1) \sinh(kh_3) + \rho_2 \cosh(kh_3)\sinh(kh_1)]\sinh(k h_b)}{ {\bf D}(k) }, 
\end{align}
where \begin{equation}\label{den1}
\begin{split}
{\bf D}(k)&= \rho_2\Big(\rho_3 \cosh(k h_3) \cosh(k h_b)+\rho_2 \sinh(kh_3)\sinh(kh_b)\Big)\sinh(kh_1) \\
& +   \rho_1\Big(\rho_2 \cosh(k h_3) \sinh(k h_b)+\rho_3 \sinh(kh_3)\cosh(kh_b)\Big)\cosh(kh_1).  \\
\end{split}
\end{equation}
Note that for $k\ne 0,$ the sign of ${\bf D}(k)$ coincides with the sign of $k$ and thus  ${\bf D}(k)\ne 0.$ 
The equations for $\xi_i$ are like in \eqref{var_H_eta} and the linearised equations in the leading order approximation therefore are
\begin{align}
\eta_{2,t}&=A^0_{22}(D)\xi_2+A^0_{23}(D) \xi_3,\\
\eta_{3,t}&=A_{32}^{0}(D)\xi_2+A^0_{33}(D)\xi_3, \\
\xi_{2,t}&=-g(\rho_2-\rho_1)\eta_2,\\
\xi_{3,t}&=-g(\rho_3-\rho_2)\eta_3.
\end{align}
The analogue of the matrix $\mathcal{M}$ in \eqref{evol_syst} is now
\begin{equation}
\mathcal{M}(k)=\begin{pmatrix}
 0 & 0 &  A^0_{22}(k) & A^0_{23}(k)\\
 0 & 0 &  A^0_{23}(k) & A^0_{33}(k)\\
 g(\rho_1-\rho_2) &  0 & 0 &0 \\
 0 & g(\rho_2-\rho_3) & 0 & 0 
\end{pmatrix}.
\end{equation}
The characteristic equation ${\rm det} (\mathcal{M}(k)+ikcI_4)=0$ leads to the following bi-quadratic equation for the propagation speeds $c:$
\begin{align}
   &c^4   + a_2 c^2 +a_0 =0,  \label{eq4cRL}\\
   a_2&=-\frac{g(A^0_{22}(\rho_2 - \rho_1) + A^0_{33}(\rho_3-\rho_2)) }{k^2}=-\frac{g {\bf D}_1(k)}{k{\bf D}(k)},\\
   a_0&=  \frac{g^2(\rho_3 - \rho_2)(\rho_2 - \rho_1)(A^0_{22}A^0_{33} - (A^0_{23})^2)}{k^4} \nonumber \\
   & =     \frac{g^2(\rho_3 - \rho_2)(\rho_2 - \rho_1)\sinh(kh_1)\sinh(kh_3)\sinh(k h_b)}{k^2 {\bf D}(k)},
\end{align}
where \begin{align}
    {\bf D}_1(k)=& [\rho_2(\rho_3-\rho_1)\cosh(k h_3) \sinh(k h_b) + \rho_3(\rho_2-\rho_1) \sinh(kh_3)\cosh(kh_b)]\sinh(k h_1) \nonumber \\
    &+\rho_1(\rho_3-\rho_2)\cosh(kh_1) \sinh(kh_3)\sinh(k h_b).
\end{align}

\subsection{Short wave limit}

The short wave limit for the solutions of \eqref{eq4cRL} can be obtained in a similar way by the asymptotics $\sinh(k h_i) \sim \pm (1/2)\exp(kh_i)$ and $ \cosh (k h_i) \sim (1/2)\exp(k h_i) $ for $k\to \pm \infty.$  Thus 
\begin{align}
    a_2 & \sim -\frac{g}{|k|} \left(\frac{\rho_2 - \rho_1}{\rho_2 + \rho_1}+ \frac{\rho_3-\rho_2}{\rho_3+\rho_2} \right), \\
    a_0 & \sim \frac{g^2}{k^2} \left(\frac{\rho_2 - \rho_1}{\rho_2 + \rho_1}\right) \left(\frac{\rho_3-\rho_2}{\rho_3+\rho_2}\right) .
\end{align} The roots of \eqref{eq4cRL} are real as well, and coincide with the two of the already known values \eqref{cshw}
\begin{equation}
    c_{1}^2 \sim  \frac{g}{|k|} \left(\frac{\rho_2 - \rho_1}{\rho_2 + \rho_1}\right) , \quad c_{2}^2 \sim  \frac{g}{|k|} \left(\frac{\rho_3 - \rho_2}{\rho_3 + \rho_2}\right).
\end{equation}

\subsection{Long wave limit}

   Passing to the long-wave limit $k\to 0$ gives
   \begin{align}
       a_2 & \to -g\frac{\rho_2(\rho_3-\rho_1) h_1 h_b + \rho_3(\rho_2-\rho_1) h_1 h_3 + \rho_1 (\rho_3-\rho_2) h_3 h_b}{\rho_2 \rho_3 h_1 + \rho_1 \rho_2 h_b + \rho_1 \rho_3 h_3},\\
       a_0 & \to  \frac{g^2(\rho_3 - \rho_2)(\rho_2 - \rho_1) h_1 h_3 h_b}{\rho_2 \rho_3 h_1 + \rho_1 \rho_2 h_b + \rho_1 \rho_3 h_3},
   \end{align}
and, therefore, we obtain the equation
\begin{align}\label{fourth_order}
   (\rho_2 \rho_3 h_1 +& \rho_1 \rho_2 h_b + \rho_1 \rho_3 h_3) c^4 \nonumber \\
   &-g[\rho_2(\rho_3-\rho_1) h_1 h_b + \rho_3(\rho_2-\rho_1) h_1 h_3 + \rho_1 (\rho_3-\rho_2) h_3 h_b] c^2 \nonumber \\
   & +g^2(\rho_3 - \rho_2)(\rho_2 - \rho_1) h_1 h_3 h_b=0.
\end{align}
which appeared also in \cite{Rat,Mil} where it is obtained by other methods. We will prove in the following that equation \eqref{fourth_order} has four real roots for all values of the densities and all values of the (mean) depths $h_1,h_3,h_b$. To begin with, we denote $c^2=Z$ in \eqref{fourth_order} and obtain the second degree equation in $Z$ 
\begin{align}\label{Z_eq}
   (\rho_2 \rho_3 h_1 +& \rho_1 \rho_2 h_b + \rho_1 \rho_3 h_3) Z^2 \nonumber \\
   &-g[\rho_2(\rho_3-\rho_1) h_1 h_b + \rho_3(\rho_2-\rho_1) h_1 h_3 + \rho_1 (\rho_3-\rho_2) h_3 h_b] Z \nonumber \\
   & +g^2(\rho_3 - \rho_2)(\rho_2 - \rho_1) h_1 h_3 h_b=0.
\end{align}

Taking into account that $\rho_3>\rho_2>\rho_1$ and applying the Cauchy-Schwarz inequality we have first
\begin{align}
    &\rho_2(\rho_3-\rho_1) h_1 h_b + \rho_3(\rho_2-\rho_1) h_1 h_3 + \rho_1 (\rho_3-\rho_2) h_3 h_b \nonumber\\
    &\geq 2h_1\sqrt{h_3 h_b}\sqrt{\rho_2\rho_3(\rho_3-\rho_1)(\rho_2-\rho_1)}\nonumber \\
   & +2h_b\sqrt{h_1 h_3}\sqrt{\rho_1\rho_2(\rho_3-\rho_1)(\rho_3-\rho_2)}\nonumber\\
   &+2h_3\sqrt{h_1 h_b}\sqrt{\rho_1\rho_3 (\rho_2-\rho_1)(\rho_3-\rho_2)}\\
   &>2\sqrt{(\rho_3-\rho_2)(\rho_2-\rho_1)}\left[\sqrt{\rho_2\rho_3}h_1\sqrt{h_3 h_b}+\sqrt{\rho_1\rho_2}h_b\sqrt{h_1 h_3}+\sqrt{\rho_1\rho_3}h_3\sqrt{h_1 h_b} \right]\nonumber\\
   &=2\sqrt{(\rho_3-\rho_2)(\rho_2-\rho_1)}\sqrt{h_1 h_3 h_b}\left[\sqrt{\rho_2\rho_3 h_1}+\sqrt{\rho_1\rho_2 h_b}+\sqrt{\rho_1\rho_3 h_3}\right]\nonumber\\
   &\geq2\sqrt{(\rho_3-\rho_2)(\rho_2-\rho_1)}\sqrt{h_1 h_3 h_b} \sqrt{\rho_2\rho_3 h_1+\rho_1\rho_2 h_b+\rho_1\rho_3 h_3},\nonumber
\end{align}
where for the last inequality we have used that $\rho_i (i=1,2,3), h_1,h_3$ and $h_b$ are positive.
By squaring in the obtained inequality, we see that the discriminant of the equation \eqref{Z_eq} is strictly positive, that is, equation \eqref{Z_eq} has two distinct real roots, which, by the Vi\`ete's formulas, are positive. From the latter finding we infer that the fourth order equation \eqref{fourth_order} has four real roots: two positive and two negative.

An approximation for the smaller root we obtain by neglecting $c^4$ in \eqref{fourth_order},
\begin{equation}
    c_{-}:=\pm \sqrt{\frac{g(\rho_3 - \rho_2)(\rho_2 - \rho_1) h_1 h_3 h_b}{\rho_2(\rho_3-\rho_1) h_1 h_b + \rho_3(\rho_2-\rho_1) h_1 h_3 + \rho_1 (\rho_3-\rho_2) h_3 h_b }}
\end{equation} which, in turn, is approximately equal to the expression \eqref{a3} when $\rho_1 \approx \rho_2$ in the expression $ \rho_1 (\rho_3-\rho_2) h_3 h_b \approx  \rho_2 (\rho_3-\rho_2) h_3 h_b. $

By neglecting the last term in \eqref{fourth_order} we obtain an approximation for the larger root,
\begin{equation}
     c_{+}:=\pm \sqrt{\frac{g[\rho_2(\rho_3-\rho_1) h_1 h_b + \rho_3(\rho_2-\rho_1) h_1 h_3 + \rho_1 (\rho_3-\rho_2) h_3 h_b]}{\rho_2 \rho_3 h_1 + \rho_1 \rho_2 h_b + \rho_1 \rho_3 h_3}}.
\end{equation}
Under the further assumption that $\rho_1\approx \rho_2 \approx \rho_3 = \rho$ in the denominator and in the ratios $\rho_i/\rho_j$
\begin{align}
       c_{+} &\approx \pm \sqrt{\frac{g[\rho_2(\rho_3-\rho_1) h_1 h_b + \rho_3(\rho_2-\rho_1) h_1 h_3 + \rho_1 (\rho_3-\rho_2) h_3 h_b]}{\rho^2( h_1 + h_b +  h_3)}} \nonumber \\
        &  \approx \pm \sqrt{\frac{g}{h_1+h} \left(\frac{(\rho_2-\rho_1) h_1 h_3}{\rho_2}+ \frac{(\rho_3-\rho_2) h_3 h_b}{\rho_3}+\frac{(\rho_3-\rho_1) h_1 h_b}{\rho_3}   \right)}   ,  \nonumber 
\end{align} which is \eqref{a3}.

Therefore the approximation formulas for both models agree under the further assumption of small relative differences between densities, $|\rho_i - \rho_j|/\rho_n \ll 1.$

Finally, we point out that each propagation speed has its trace at every surface. For the largest propagation speed $c_+$ (mode 1) the ratio of the wave amplitudes of the upper $\eta_2$ and of that of the lower interface $\eta_3$ in the linear regime is
$  \left({\eta_2}/{\eta_3}\right)_{c=c_+}   >0, $ In this case both waves have the same polarity. 
For the slow-propagating mode $c_-,$ (mode 2), $  \left({\eta_2}/{\eta_3}\right)_{c=c_-}   <0,$ and in this case the waves have the opposite polarity.  See for example the details in \cite{Mil}.

\subsection{Boussinesq regime for the rigid lid system}\label{Boussinesq}
The next example of evolution equations arising from \eqref{eta12tRL}, \eqref{Bern1} and \eqref{Bern2} is the Boussinesq regime. As a result we obtain a coupled system of nonlinear evolution equations. We recall the expansions of the DN operators in the case under consideration \cite{CGK,IMT}:
\begin{align}
    G_1(\eta_2)&=D\tanh(h_1D)-D\eta_2 D+D\tanh(h_1D) \eta_2 D\tanh(h_1D)+\ldots , \label{G1}\\
     G(\eta_3)&=D\tanh(h_bD)+D\eta_3 D-D\tanh(h_b D) \eta_3 D\tanh(h_bD)+\ldots ,\\
     \Gamma_{11}(\eta_2, \eta_3)&= D \coth(h_3D) + D\csch(h_3D)\eta_3 D\csch (h_3D) \nonumber \\
     & \phantom{****}+ D\eta_2D-D\coth(h_3D)\eta_2D\coth(h_3D)+\ldots,  \\
     \Gamma_{12}(\eta_2,\eta_3)&= -D\csch(h_3D)-D\csch(h_3D)\eta_3 D\coth (h_3D)  \nonumber \\
      & \phantom{****}+D \coth(h_3D)\eta_2 D \csch(h_3D)+\ldots \\
      \Gamma_{21}(\eta_2,\eta_3)&=  -D\csch(h_3D) -D\coth(h_3D)\eta_3 D \csch(h_3D) \nonumber \\
       & \phantom{****}+D\csch(h_3D) \eta_2 D \coth(h_3D)+\ldots, \\
       \Gamma_{22}(\eta_2, \eta_3)&= D \coth(h_3D) + D\coth(h_3D)\eta_3 D\coth (h_3D) - D \eta_3 D  \nonumber \\
       & \phantom{****}-D\csch(h_3D) \eta_2 D \csch(h_3D)+\ldots.  \label{Gamma22}
     \end{align}
We use these expansions in the Boussinesq regime, that is, assuming that the small amplitude parameters for each combination $\varepsilon_{ij} =\frac{|\eta_j|_{\text{max}}}{h_i}$ are of the same order of magnitude, say $\varepsilon,$ for all $i,j$  and also the long-wave parameters $\delta_j=\frac{2 \pi h_j}{\lambda}=k h_j$ are of the same order $\delta,$ and  moreover $\varepsilon \ll1 $ and $ \delta^2 \ll 1$ are of the same order as well.  Furthermore, we recall the fact that the eigenvalue of $D$ is the wave-number $k$, so the action of $D$ will generate a quantity of order $\delta.$  On the other hand, the multiplication by $\eta_j$ will generate a quantity of order $\varepsilon.$  Keeping this in mind, the relevant expansion of \eqref{A22}--\eqref{A33} in terms of the DN operators \eqref{G1} -- \eqref{Gamma22} is:
\begin{align}
    A_{22}=A_{220}D^2 + A_{221}D^4+ B_{222} D\eta_2D+B_{223} D\eta_3 D +\ldots,\\
    A_{23}=A_{32}=A_{230}D^2 + A_{231}D^4+ B_{232} D\eta_2D+B_{233} D\eta_3 D +\ldots,\\
    A_{33}=A_{330}D^2 + A_{331}D^4+ B_{332} D\eta_2D+B_{333} D\eta_3 D +\ldots ,
\end{align}
where the expansion coefficients are \begin{align}
   A_{220}&= \frac{h_1(h_3\rho_3 + h_b\rho_2) }{ h_1\rho_2\rho_3+ h_3\rho_1 \rho_3 + h_b\rho_1\rho_2}, \\
    A_{221}&=- \frac{h_1^2 (h_1h_3^2\rho_1\rho_3^2 + 2h_1h_3h_b\rho_1\rho_2\rho_3 + h_1h_b^2\rho_1\rho_2^2 + h_3^3\rho_2\rho_3^2 + 3h_3^2h_b\rho_2^2\rho_3 + 3h_3h_b^2\rho_2^3 + h_b^3\rho_2^2\rho_3)}{3(h_1\rho_2\rho_3+ h_3\rho_1 \rho_3 + h_b\rho_1\rho_2)^2},\\
    B_{222}&= \frac{-(h_3\rho_3 + h_b\rho_2)^2\rho_1 + h_1^2\rho_2\rho_3^2}{(h_1\rho_2\rho_3+ h_3\rho_1 \rho_3 + h_b\rho_1\rho_2)^2},\\
     B_{223}&= \frac{(\rho_2-\rho_3)h_1^2\rho_2\rho_3 }{(h_1\rho_2\rho_3+ h_3\rho_1 \rho_3 + h_b\rho_1\rho_2)^2},
   \end{align}
\begin{align} 
     A_{230}&=\frac{ h_1 h_b \rho_2}{h_1\rho_2\rho_3+ h_3\rho_1 \rho_3 + h_b\rho_1\rho_2} ,\\
     A_{231}&= -  \frac{h_1 h_b \rho_2[6h_1h_bh_3 \rho_2^2+ h_b(2h_1^2 + 3 h_3^2)\rho_1\rho_2 + (3 h_3^2 + 2h_b^2)h_1\rho_2\rho_3 + h_3(2h_1^2 + h_3^2 +2 h_b^2)\rho_1\rho_3]}{6(h_1\rho_2\rho_3+ h_3\rho_1 \rho_3 + h_b\rho_1\rho_2)^2} , \\
      B_{232}&= -\frac{h_b \rho_1 \rho_2 [h_b \rho_2  +  (h_1 + h_3)\rho_3] }{( h_1\rho_2\rho_3+ h_3\rho_1 \rho_3 + h_b\rho_1\rho_2  )^2 } ,\\
       B_{233}&=\frac{h_1 \rho_3\rho_2 [(h_3 + h_b)\rho_1 + h_1 \rho_2 ]}{( h_1\rho_2\rho_3+ h_3\rho_1 \rho_3 + h_b\rho_1\rho_2)^2},
\end{align}
\begin{align}
     A_{330}&=\frac{h_b(h_1\rho_2 + h_3\rho_1)}{h_1\rho_2\rho_3+ h_3\rho_1 \rho_3 + h_b\rho_1\rho_2}\\
     A_{331}&=-\frac{h_b^2 (h_1^3\rho_1\rho_2^2 + 3h_1^2h_3\rho_2^3 + h_1^2h_b\rho_2^2\rho_3 + 3h_1h_3^2\rho_1\rho_2^2 + 2h_1h_3h_b\rho_1\rho_2\rho_3 + h_3^3\rho_1^2\rho_2 + h_3^2h_b\rho_1^2\rho_3)}{3(h_1\rho_2\rho_3+ h_3\rho_1 \rho_3 + h_b\rho_1\rho_2)^2},\\
      B_{332}&=\frac{h_b^2(\rho_1 -\rho_2)\rho_1\rho_2}{(h_1\rho_2\rho_3+ h_3\rho_1 \rho_3 + h_b\rho_1\rho_2)^2},\\
       B_{333}&=\frac{(h_1\rho_2 + h_3\rho_1)^2\rho_3 - h_b^2\rho_1^2\rho_2}{(h_1\rho_2\rho_3+ h_3\rho_1 \rho_3 + h_b\rho_1\rho_2)^2}.
\end{align}
Since in the Boussinesq regime the equations are maximum quadratic in the field variables, equations \eqref{Bern1}, \eqref{Bern2} suggest that we need only the leading order of the quantities $\Phi^i_j$ and $F_j:$ 
\begin{align}
  \Phi^2_2=&\frac{h_1\rho_3\xi_2 - h_b\rho_1\xi_3}{h_1\rho_2\rho_3+ h_3\rho_1 \rho_3 + h_b\rho_1\rho_2},\\
   \Phi^1_2=&-\frac{h_b \rho_2 (\xi_2 + \xi_3) + h_3\rho_3\xi_2}{h_1\rho_2\rho_3+ h_3\rho_1 \rho_3 + h_b\rho_1\rho_2},\\
   \Phi^3=&\frac{h_1 (\xi_2 +\xi_3)\rho_2 + h_3 \xi_3\rho_1}{h_1\rho_2\rho_3+ h_3\rho_1 \rho_3 + h_b\rho_1\rho_2} ,\\
   \Phi^2_3=&\frac{h_1\rho_3\xi_2 - h_b\rho_1\xi_3}{h_1\rho_2\rho_3+ h_3\rho_1 \rho_3 + h_b\rho_1\rho_2}.
\end{align}
Introducing the variables $\mathfrak{u}_j=(\xi_j)_x,$ we first notice, that $F_j$ in the leading order are linear combinations of $\mathfrak{u}_{j,x},$ while $\Phi^i_{j,x}$  are linear combinations of $\mathfrak{u}_{j}$ only. 
However, we notice that $\mathcal{O}(h_j \mathfrak{u}_{j,x}) = \mathcal{O}(kh_j \mathfrak{u}_{j}) = \delta \mathcal{O}(\mathfrak{u}_{j})\ll  \mathcal{O}(\mathfrak{u}_{j}).$  Thus the terms with $F_j$ in equations \eqref{Bern1}, \eqref{Bern2} can be neglected in comparison to the other terms with $\Phi^i_{j,x}$ (like in the water wave theory of a single layer). Then we obtain the system
\begin{equation}\label{coup_nonl_s}
\begin{split}
    \mathfrak{u}_{2,t}+ &\frac{[\rho_2(h_1\rho_3\mathfrak{u}_2 - h_b\rho_1\mathfrak{u}_3)^2-\rho_1(h_b \rho_2 (\mathfrak{u}_2 + \mathfrak{u}_3) + h_3\rho_3\mathfrak{u}_2)^2]_x}{2(h_1\rho_2\rho_3+ h_3\rho_1 \rho_3 + h_b\rho_1\rho_2)^2}   +(\rho_2-\rho_1)g\eta_{2,x}=0 ,\\
     \mathfrak{u}_{3,t}+ &\frac{[\rho_3(h_1 \rho_2 (\mathfrak{u}_2 +\mathfrak{u}_3) + h_3 \rho_1 \mathfrak{u}_3 )^2-\rho_2(h_1\rho_3\mathfrak{u}_2 - h_b\rho_1\mathfrak{u}_3)^2]_x}{2(h_1\rho_2\rho_3+ h_3\rho_1 \rho_3 + h_b\rho_1\rho_2)^2}   +(\rho_3-\rho_2)g\eta_{3,x}=0, \\
     \eta_{2,t}=&-A_{220}\mathfrak{u}_{2,x}  + A_{221}\mathfrak{u}_{2,xxx}- B_{222} [\eta_2 \mathfrak{u}_{2}]_x -B_{223} [\eta_3 \mathfrak{u}_{2} ]_x   \\
     &-A_{230} \mathfrak{u}_{3,x} + A_{231}\mathfrak{u}_{3,xxx}- B_{232} [\eta_2 \mathfrak{u}_{3}]_x -B_{233} [\eta_3 \mathfrak{u}_{3} ]_x   ,\\
\eta_{3,t}=&-A_{230}\mathfrak{u}_{2,x} + A_{231} \mathfrak{u}_{2,xxx} - B_{232} [\eta_2 \mathfrak{u}_{2}]_x 
-B_{233} [\eta_3  \mathfrak{u}_{2} ]_x  \\
& -A_{330}\mathfrak{u}_{3,x}  
+ A_{331} \mathfrak{u}_{3,xxx} - B_{332} [\eta_2 \mathfrak{u}_{3}]_x 
-B_{333} [\eta_3  \mathfrak{u}_{3} ]_x . 
\end{split}
\end{equation}
Letting $\rho_1 \to 0,$ we obtain
the two-layer domain with surface $y=\eta_2(x,t)$ and interface $y=\eta_3(x,t)$, whereby the coefficients $A_{ijk}, B_{ijk}, (i, j\in \{1,2\}, k\in\{0,1,2,3\}$) are given by
\begin{align}
   A_{220}& =\frac{h_3\rho_3 + h_b\rho_2}{ \rho_2\rho_3}, \quad A_{221} = - \frac{   h_3^3 \rho_3^2 + 3h_3^2h_b\rho_2 \rho_3 + 3h_3h_b^2\rho_2^2 + h_b^3\rho_2 \rho_3}{3\rho_2  \rho_3 ^2}, \nonumber \\
    B_{222}& =  \frac{1}{\rho_2  },\quad   B_{223} = \frac{\rho_2-\rho_3    }{\rho_2\rho_3 }; \nonumber
\end{align}
\begin{align} 
     A_{230}& = \frac{  h_b }{\rho_3 } ,\quad  A_{231} = -  \frac{ h_b [6h_bh_3 \rho_2+ (3 h_3^2 + 2h_b^2) \rho_3 ]}{6   \rho_3^2} , \quad    B_{232} =  0 ,\quad   B_{233} = \frac{ 1 }{\rho_3 }; \nonumber
\end{align}
\begin{align}
     A_{330}& = \frac{h_b  }{\rho_3}, \quad  A_{331} = -\frac{h_b^2 ( 3 h_3\rho_2  + h_b \rho_3 )}{3 \rho_3 ^2},\quad 
      B_{332} = 0,\quad   B_{333} =\frac{1}{\rho_3 }. \nonumber
\end{align}
Consequently, system \eqref{coup_nonl_s} becomes
\begin{equation}\label{recov_syst}
\begin{split}
\mathfrak{u}_{2,t}+ &\frac{1}{ \rho_2 } \mathfrak{u}_2 \mathfrak{u}_{2,x}  +\rho_2 g\eta_{2,x}=0 , \\
    \mathfrak{u}_{3,t}+ &\frac{\rho_2-\rho_3}{\rho_2 \rho_3 }\mathfrak{u}_2 \mathfrak{u}_{2 ,x} + \frac{1}{\rho_3}(\mathfrak{u}_2\mathfrak{u}_3)_x+ \frac{1}{\rho_3}\mathfrak{u}_3 \mathfrak{u}_{3 ,x}  +(\rho_3-\rho_2)g\eta_{3,x}=0 ,\\
     \eta_{2,t}=&- \frac{h_3\rho_3 + h_b\rho_2}{ \rho_2\rho_3} \mathfrak{u}_{2,x}  
      - \frac{   h_3^3 \rho_3^2 + 3h_3^2h_b\rho_2 \rho_3 + 3h_3h_b^2\rho_2^2 + h_b^3\rho_2 \rho_3}{3\rho_2  \rho_3 ^2} \mathfrak{u}_{2,xxx}\\
      &-\frac{1}{\rho_2} [\eta_2 \mathfrak{u}_{2}]_x - \frac{\rho_2-\rho_3    }{\rho_2\rho_3 } [\eta_3 \mathfrak{u}_{2} ]_x     \\
     &- \frac{  h_b }{\rho_3 } \mathfrak{u}_{3,x} -  \frac{ h_b [6h_bh_3 \rho_2+ (3 h_3^2 + 2h_b^2) \rho_3 ]}{6   \rho_3^2} \mathfrak{u}_{3,xxx} - \frac{ 1 }{\rho_3 } [\eta_3 \mathfrak{u}_{3} ]_x   ,\\
\eta_{3,t}=&- \frac{  h_b }{\rho_3 } \mathfrak{u}_{2,x}  -  \frac{ h_b [6h_bh_3 \rho_2+ (3 h_3^2 + 2h_b^2) \rho_3 ]}{6   \rho_3^2}  \mathfrak{u}_{2,xxx} -  \frac{ 1 }{\rho_3 } [\eta_3  \mathfrak{u}_{2} ]_x  \\
& -\frac{h_b  }{\rho_3} \mathfrak{u}_{3,x}  
-\frac{h_b^2 ( 3 h_3\rho_2  + h_b \rho_3 )}{3 \rho_3 ^2} \mathfrak{u}_{3,xxx} -  \frac{1}{\rho_3 }[\eta_3  \mathfrak{u}_{3} ]_x , 
\end{split}
\end{equation}
which recovers the system obtained by Craig et. al. \cite{CGK} on page 1626, derived by employing similar methods involving expansions of the DN operators for two layers with a free surface. 

The procedure utilized to obtain the Boussinesq system \eqref{recov_syst} highlights a general method for the derivation of nonlinear models for water waves propagation starting from the general system \eqref{eta12tRL}, \eqref{Bern1}, \eqref{Bern2}.

\subsection{Limit to the full equations for two layers with a free surface }

The limit to the full equations for two layers with a free surface as they appear (however, in Hamiltonian form) in \cite{CGK} could be done by simply setting up $\rho_1=0$ in all formulas in this Section. Furthermore, 
given a depth $h^*$, there is a relation between the DN operators $G_{ij}(h^*, \eta_l, \eta_u)$ and $\Gamma_{ij}(h^*, \eta_l, \eta_u)$, which reads 
\begin{align}
    G_{11}(h^*, \eta_l, \eta_u)& =\Gamma_{22}(h^*, \eta_l, \eta_u) , \quad   G_{12}(h^*, \eta_l, \eta_u)=\Gamma_{21}(h^*, \eta_l, \eta_u) , \label{GG1}\\
     G_{21}(h^*, \eta_l, \eta_u)& =\Gamma_{12}(h^*, \eta_l, \eta_u) , \quad   G_{22}(h^*, \eta_l, \eta_u)=\Gamma_{11}(h^*, \eta_l, \eta_u), \label{GG2}
\end{align}
where $y=\eta_l(x,t)$ is the lower boundary of the domain and $y=\eta_u(x,t)$ is the upper boundary of the domain

From \eqref{eq:xi12} we obtain 
\begin{equation}
     \xi_2:=\rho_2(\varphi_2)_{s_2} , \quad \xi_3:=\rho_3(\varphi_3)_{s_3}-\rho_2(\varphi_2)_{s_3}.
\end{equation}
We define the operator \footnote{This is essentially the operator $B$ in \cite{CGK} for the free surface case.} 
\begin{equation}
    \mathcal{B}:=\rho_3 \Gamma_{22}+\rho_2 G
\end{equation}
 which is self-adjoint. Then we have 
 \begin{align}
     B^{-1}= \frac{1}{\rho_2}G_1^{-1}\Gamma_{12} \mathcal{B} ^{-1 }, \quad  ( B^{-1})^*= \frac{1}{\rho_2} \mathcal{B} ^{-1 } \Gamma_{21}G_1^{-1 }.
 \end{align}
From \eqref{FFFa} and \eqref{FFFb} we obtain
\begin{align}
    (\varphi_2)_{s_2}&=\frac{1}{\rho_2} \xi_2 =:\Phi^2_2(\xi_2) ,  \label{FFFa1}\\
    (\varphi_2)_{s_3}&=-\mathcal{B} ^{-1 } \left[ G\xi_3 + \frac{\rho_3}{\rho_2}  \Gamma_{21} \xi_2 \right]=:\Phi^2_3(\eta_i, \xi_i). \label{FFFb1}
\end{align}
Then, from \eqref{Eq:eta3}, \eqref{FFFa1}, \eqref{FFFb1} after some calculations we have
\begin{equation} \label{Phi3RL}
    (\varphi_3)_{s_3}=-G^{-1}[\Gamma_{21}(\varphi_2)_{s_2}+\Gamma_{22}(\varphi_2)_{s_3}]=\mathcal{B} ^{-1 } \left[-\Gamma_{21}\xi_2+\Gamma_{22}\xi_3 \right]=:\Phi^3(\eta_i, \xi_i).
\end{equation}
In view of \eqref{GG1}-\eqref{GG2} these formulas reproduce (2.26)-(2.28) of \cite{CGK}. Next, from \eqref{A22} \eqref{A33} we obtain the expressions for the operators
\begin{align}
    A_{22}&= \frac{1}{\rho_2}\Gamma_{11}-\frac{\rho_3}{\rho_2}\Gamma_{12}\mathcal{B} ^{-1 }\Gamma_{21}, \\
    A_{23}&= -\Gamma_{12}\mathcal{B} ^{-1 }G, \\
   A_{32}&= \frac{1}{\rho_2}\left(-\mathcal{B}  +{\rho_3}\Gamma_{22}\right)\mathcal{B} ^{-1 }\Gamma_{21}=-G\mathcal{B} ^{-1 } \Gamma_{21}=A_{23}^* ,\\
    A_{33}&= \Gamma_{22}\mathcal{B} ^{-1 }G.
\end{align}
The operator $A_{33}$ is self-adjoint since $G$ and $\Gamma_{22}$ are self-adjoint operators, \cite{CGK}. Indeed, we can write it formally as 
$$A_{33}=\frac{1}{\rho_3} \Gamma_{22}\left( 1+\frac{\rho_2}{\rho_3} \Gamma_{22}^{-1} G \right)^{-1}\Gamma_{22}^{-1}G.$$
In the leading order $ \frac{\rho_2}{\rho_3}| \Gamma_{22}^{-1} G|= \frac{\rho_2}{\rho_3} |\tanh(kh_b) \tanh(k h_3)|<1, $ so, at least in some domain for the values of $(k,\eta_2,\eta_3)$ in the neighbourhood of $(0,0,0)$ we have an expansion 
$$A_{33}=\frac{1}{\rho_3} \left( G - \frac{\rho_2}{\rho_3}  G \Gamma_{22}^{-1 }G +   \frac{\rho_2^2}{\rho_3^2}  G \Gamma_{22}^{-1 }G  \Gamma_{22}^{-1 }G+...   \right)$$
from where it is evident that $A_{33}=A_{33}^*$ thus $A_{33}= G \mathcal{B} ^{-1 } \Gamma_{22}.$ The operator $A_{22}$ is also self-adjoint since $\Gamma_{11}$ is self adjoint and $\Gamma_{12}^*=\Gamma_{21},$ cf. \cite{CGK}.

The functionals $F_2$ and $F_3$ are as in \eqref{F23}, then the full equations of motion are the same as in \eqref{eta12tRL}, \eqref{Bern1} and \eqref{Bern2} (with $\rho_1=0$), that is, explicitly,
\begin{align}
    \eta_{2,t}&=\left( \frac{1}{\rho_2}\Gamma_{11}-\frac{\rho_3}{\rho_2}\Gamma_{12}\mathcal{B} ^{-1 }\Gamma_{21} \right) \xi_2 -\Gamma_{12}\mathcal{B} ^{-1 }G \xi_3 =: F_2(\eta_i,\xi_i) , \label{eta2RLfin} \\
    \eta_{3,t}&= G\mathcal{B} ^{-1 }[ -\Gamma_{21} \xi_2+ \Gamma_{22} \xi_3]=:F_3(\eta_i, \xi_i), \label{eta3RLfin}\\
     \xi_{2,t}&=-\frac{\rho_2 (\Phi^2_{2,x})^2-2F_2\xi_{2,x}\eta_{2,x}-\rho_2F_2^2   }{2(1+\eta_{2,x}^2)}-\rho_2g \eta_2 ,  \\
        \xi_{3,t}&=-\frac{\rho_3 (\Phi^3_{x})^2-\rho_2 (\Phi_{3,x}^2)^2-2F_3\xi_{3,x}\eta_{3,x}+(\rho_2-\rho_3)F_3^2   }{2(1+\eta_{3,x}^2)}-(\rho_3-\rho_2)g \eta_3 .  \label{Bern2RLfin}
\end{align}
The equation \eqref{eta2RLfin} follows also from \eqref{Eq:eta2}, \eqref{FFFa1}, \eqref{FFFb1}; the equation \eqref{eta3RLfin} follows also from \eqref{Eq:eta3} and \eqref{Phi3RL}.
The DN operators depend on $\eta_2$ and $\eta_3$ in a complicated nonlinear way, however, as we have mentioned, there is a systematic approach for their derivation. In the Boussinesq limit the above system produces the system \eqref{recov_syst}, see the derivation in \cite{CGK}, although we have obtained it as a limit from \eqref{coup_nonl_s}.  Fig. \ref{Fig2} shows the relations between the derived models in this Section. 
\begin{figure}
\[
\begin{tikzcd}
\text{\small{Full equations: 
\eqref{eta12tRL}, \eqref{Bern1} and \eqref{Bern2}}} \arrow{r}{\rho_1=0} \arrow[swap]{d}{\text{Boussinesq regime}} & \text{\small Equations \eqref{eta2RLfin}-\eqref{Bern2RLfin}} \arrow{d}{\text{Boussinesq regime}} \\
\eqref{coup_nonl_s} \arrow{r}{\rho_1=0} & \eqref{recov_syst}
\end{tikzcd}
\]
\caption{Relations between the derived models: limits and approximations.}
\label{Fig2}
\end{figure}

\section{Conclusions and Outlook}
\noindent We have presented a Hamiltonian formulation for the nonlinear governing equations describing irrotational two-dimensional inviscid and incompressible water flows with piece-wise constant density stratification in a three-layer fluid with a flat bottom, a free surface and two interfaces. We then introduced in each layer Dirichlet-Neumann operators with the help of which we re-expressed the Hamiltonian of the system and the equations of motion for the free surface and for the interfaces. The Hamiltonian formulation and the leading-order Dirichlet-Neumann operators have allowed us to obtain linearized equations for small-amplitude waves at the surface and interfaces.
We derived a bi-cubic equation representing the dispersion relation, in the short- and long-wave limits, we have analyzed the solutions of this equation and we obtained approximate expressions for the speed of the left- and right-running waves on the surface and at the internal interfaces. 

The results presented here open new avenues of investigations, of which we mention a few.
\begin{enumerate}
\item [ (i) ]Formulas \eqref{xiDef},\eqref{solQEq}, \eqref{phi13surf} give (through the Hamiltonian variables) the boundary values of the velocity potential which extends analytically to the interior of the fluid body. This will lead to in-depth investigations of important properties of the flow pertaining to the pressure, velocity field and particle trajectories, cf. e.g. \cite{DH}.

\item [ (ii) ] Since the three-layer scenario renders the inclusion of a current with piece-wise linear profile more realistic and natural (like for example in \cite{CIM,CICMP}), various results from the two-layer case concerning wave-current interactions in stratified flows are amenable to further extensions.   
In particular, the presence of currents in the layers will allow the formation of Kelvin-Helmholtz instabilities at the interfaces (see for example \cite{BenBri, BenBri2}). In this case, the instability condition at each interface will incorporate the current components in the three layers, displaying, therefore, an increased level of complexity to the problem which will require the pursuit of significant new research in this direction.

\item [ (iii) ] Furthermore, a detailed study of the Dirichlet-Neumann (DN) Operators will lead to weakly-nonlinear models for various propagation regimes which will be deduced from the equations presented in their general form, cf. \eqref{E11}, \eqref{E12}, \eqref{eta2}, \eqref{E22}, \eqref{eta3}, \eqref{E33} in the case of a free surface, and \eqref{eta12tRL}, \eqref{Bern1}, \eqref{Bern2} in the case of flat surface. An example was given by the derivation of the coupled system \eqref{recov_syst} in the Boussinesq regime. 
From the weakly nonlinear regimes like the KdV regime the soliton dynamics can be recovered, which is of practical importance given the fact that the solitons are stable and easier to observe in comparison to other wave types.    
The central role in the derivation of such systems is played by the DN operators and their expansions. Our motivation was based on the understanding that in the present research environment the use of symbolic computation will increase dramatically (perhaps powered by AI) and the expansions of the DN operators should be highly facilitated.
\end{enumerate}
The utilization of the Hamiltonian formulation and the DN operators allows a concise derivation of model equations for this type of stratified flows. We have demonstrated how these equations can be written in an explicit final form for all scales and propagation regimes. If necessary, the expansions of the DN operators lead to particular equations for a specific (small) scale parameter (up to the desired order of the small parameter).  For many configurations, especially the most common ones, these expansions are already known, which solves the problem immediately.  Highlighting the significance of the DN operators, we point out that recent years have seen a surge in the study of both the full nonlinear water wave problem and various approximate nonlinear water wave models. This progress is driven by a combination of methods focusing on DN operators and Hamiltonian systems, cf. \cite{Bal, Ber, Lan08, Did, CICMP, Lan_JAMS, Lan, IMT, Nach}, as well as methods for numerical computation of quantities, expressed by DN operators \cite{G17,GP,CGN}, following the pioneer work \cite{CS}.

\subsection*{Conflict of interest statement}

On behalf of all authors, the corresponding author states that there is no conflict of interest.

\subsection*{Data availability statement }

The reported results are of purely theoretical nature, and no data have been used supporting these results.

\subsection*{Acknowledgements} 
{The authors would like to thank three anonymous referees and the editor (Prof. J. Kirby) whose comments and suggestions greatly improved this paper.} R.I. gratefully acknowledges the financial support of Research Ireland under Grant number 21/FFP-A/9150. C.I.M. acknowledges the support from the project "Nonlinear Studies of Stratified Oceanic and Atmospheric Flows" funded by the European Union – NextGenerationEU and Romanian Government, under the National Recovery and Resilience Plan for Romania, contract no. 760040/23.05.2023, cod PNRR-C9-I8-CF 185/22.11.2022, through the Romanian Ministry of Research, Innovation, and Digitalization, within Component 9, Investment I8.

\end{document}